\documentclass[opre,nonblindrev]{informs3} %
\DoubleSpacedXI %

\usepackage[ruled]{algorithm2e}

\SetKwInput{KwPara}{Parameters}
\SetAlFnt{\small}
\SetAlCapFnt{\small}
\SetAlCapNameFnt{\small}
\SetAlCapHSkip{0pt}

\newcommand{\prob}{\ensuremath{\mathbf{P}}}
\newcommand{\expec}{\ensuremath{\mathbf{E}}}
\newcommand{\ind}{\ensuremath{\mathbf{I}}}
\newcommand{\reals}{\ensuremath{\mathbb{R}}}

\newcommand{\defeq}{\ensuremath{\triangleq}}

\usepackage{graphics, graphicx} %
\usepackage{subcaption}

\usepackage[showdeletions]{color-edits}

\newcommand{\instance}{\ensuremath{\mathcal{I}}\xspace}
\newcommand{\alg}{\ensuremath{\mathfrak{a}}\xspace}
\newcommand{\full}{\ensuremath{\mathfrak{Full}}\xspace}
\newcommand{\opt}{\ensuremath{\mathsf{OPT}}}
\newcommand{\regret}{\ensuremath{\mathsf{Reg}}}

\newcommand{\gap}{\ensuremath{\mathsf{Gap}}}
\newcommand{\pers}{\ensuremath{\mathsf{Pers}}}
\newcommand{\relint}{\ensuremath{\mathsf{relint}}}
\newcommand{\event}{\ensuremath{\mathcal{E}_T}}

\newcommand{\kl}[2]{\ensuremath{\mathsf{KL}\left({#1||#2}\right)}}
\newcommand{\leap}{\ensuremath{\mathfrak{Rai}}\xspace}

\usepackage{amsmath, amssymb} %
\usepackage[all,warning]{onlyamsmath}

\usepackage[margin=1in]{geometry} %
\usepackage{booktabs} %
\usepackage{units} %
\usepackage{microtype} %
\usepackage{enumitem} %
\usepackage{multicol}

\usepackage{fancyhdr} %
\usepackage{url} %
\usepackage[normalem]{ulem}
\usepackage[dvipsnames,svgnames]{xcolor} %

\usepackage{xspace}

\usepackage{tikz} %
\usetikzlibrary{shapes,decorations}
\usetikzlibrary{arrows}
\usepackage{pgfplots} %
\pgfplotsset{compat=newest} %

\usepackage{comment}

\usepackage{natbib}
 \bibpunct[, ]{(}{)}{,}{a}{}{,}%
\TheoremsNumberedThrough     %
\ECRepeatTheorems

\EquationsNumberedThrough    %
\begin{document}

\RUNAUTHOR{Zu, Iyer, and Xu}
\RUNTITLE{Learning to Persuade on the Fly}

\TITLE{Learning to Persuade on the Fly: Robustness Against Ignorance}

\ARTICLEAUTHORS{%
  \AUTHOR{You Zu} \AFF{Industrial and Systems Engineering, University
    of Minnesota, Minneapolis, MN, 55455,
    \EMAIL{zu000002@umn.edu}} %
  \AUTHOR{Krishnamurthy Iyer} \AFF{Industrial and Systems Engineering,
    University of Minnesota, Minneapolis, MN 55455,
    \EMAIL{kriyer@umn.edu}}%
  \AUTHOR{Haifeng Xu} \AFF{Department of Computer Science, University
    of Chicago, Chicago, IL  60637,
    \EMAIL{haifengxu@uchicago.edu}}
} %

\ABSTRACT{%
% start input ./abstract.tex
%
Motivated by information sharing in online platforms, we study
repeated persuasion between a sender and a stream of receivers where
at each time, the sender observes a payoff-relevant state drawn
independently and identically from an \emph{unknown} distribution, and
shares state information with the receivers who each choose an action.
The sender seeks to persuade the receivers into taking actions aligned
with the sender's preference by selectively sharing state information.
However, in contrast to the standard models, neither the sender nor
the receivers know the distribution, and the sender has to persuade
while learning the distribution on the fly.

We study the sender's learning problem of making \emph{persuasive}
action recommendations to achieve low regret against the optimal
persuasion mechanism with the knowledge of the distribution. To do
this, we first propose and motivate a persuasiveness criterion for the
unknown distribution setting that centers {\em robustness} as a
requirement in the face of uncertainty. Our main result is an
algorithm that, with high probability, is robustly-persuasive and
achieves $O(\sqrt{T\log T})$ regret, where $T$ is the horizon length.
Intuitively, at each time our algorithm maintains a set of candidate
distributions, and chooses a signaling mechanism that is
simultaneously persuasive for all of them. Core to our proof is a
tight analysis about the cost of robust persuasion, which may be of
independent interest. We further prove that this regret order is
optimal (up to logarithmic terms) by showing that no algorithm can
achieve regret better than $\Omega(\sqrt{T})$.

  %

  %
  %
  %
  %
  %

  %
  %
  %
  %

  %
  %
  %
  %
  %
  %
  %
  %
  %
  %
  %
  %
  %
  %
  %

  %
  %
  %
  %
  %
  %
  %
  %
  %
  %
  %
  %
  %
  %
  %
  %
  %
  %
  %

%
%
%
%
%
%
 % end input ./abstract.tex
 %
}

\KEYWORDS{no-regret learning, robustness, persuasion, prior-independence}

\maketitle

%
% start input ./introduction.tex
\section{Introduction}
\label{sec:introduction}
Examples of online platforms recommending content or products
  to their users abound in online economy. For instance, online
  retailers like Amazon or Etsy recommend products from third-party
  sellers to users, styling services like Stitch Fix recommend
  clothing designs made by custom brands, and online platforms like
  YouTube or Spotify recommend content or playlist generated by
  creators. There are two intrinsic challenges in such online
  recommendations, which we address simultaneously in this paper.
  First, the platform making such recommendations often needs to
  balance the dual objectives of being persuasive (i.e., making
  obedient recommendations that will be adopted by the
  users~\citep{bergemann2016bayes}) as well as furthering the
  platform's goals such as increased sales, fewer returns or more
  engaged users. Second, the platform often faces a large volume of
  \emph{new} products/contents/services with {\em a priori} unknown
  quality/reward distributions and thus has to learn to make good
  recommendation. We tackle these two challenges by studying learning
  to persuade on the fly.

\subsection{Motivating Applications}
To motivate the problem we consider, we now describe two concrete
examples in the domain of two-sided platforms.

\begin{example}[\textbf{Content recommendations by online media
    platforms}] Consider a media platform like YouTube or TikTok, that
  recommends content created by independent creators (``channels'') to
  its users. New channels regularly join the platform, and start
  producing content whose {\em quality distribution\/} is unknown to
  both the platform and its users. Here, by a content's quality, we
  refer to how engaging, interesting or relevant the users find the
  content. Despite this lack of knowledge, the platform faces the
  problem of deciding whether to recommend content from such new
  channels to its stream of users. In this context, the users seek to
  consume fresh and high-quality content, while the platform itself
  may have other goals, such as maximizing user engagement or
  increasing channel exposure, which are not fully aligned with users'
  interests. Furthermore, from extensive user-level data, the platform
  may have good estimates about the utility a user derives from
  consuming a particular content. A user encountering a new channel
  may have a prior belief about its quality distribution based on
  their past experiences in the platform, and from any information
  provided by the channel itself on their profile. Furthermore, the
  user may have additional (partial) information from any reviews or
  ratings left by previous users (or similar summary statistics). For
  each new content from a channel, the platform observes its quality
  (perhaps after an initial exploration or through in-house reviewers)
  and decides whether or not to recommend the content to its users. If
  the platform and the users know a channel's content quality
  distribution, the platform can reliably make recommendations that
  optimize its own goals while maintaining user satisfaction, by
  consistently mixing high-quality content with some mediocre ones.
  However, given the lack of such distributional information, the
  platform must \emph{learn} to make such recommendations over time,
  as the channel produces more content.~\Halmos
\end{example}

\begin{example}[\textbf{Recommendations on hiring platforms}]
  Consider a hiring platform, where employers receive recommendations
  about candidates for recruitment (e.g., ``recommended matches'' in
  LinkedIn Recruiter). These recommendations are typically tailored to
  the employer's project requirements. However, within the set of
  candidates satisfying the requirements, there would be a range of
  capabilities/fit, whose {\em distribution\/} would be unknown to the
  platform or the employer. Nevertheless, for any particular candidate
  who might be interested in the position, the platform may be able to
  assess the candidate's capability based on various candidate
  features, such as her endorsements, references, etc, using which the
  platform decides whether or not to recommend the candidate.
  Similarly, the employer through the course of interviewing different
  candidates may learn about the capability distribution. While the
  employer would prefer to be matched with few high-capability
  candidates to interview, the platform may have additional incentives
  from having to cater to the candidates-side of the market, such as
  increasing the overall number of interviews. Once again, if the
  distribution of the candidates' capabilities is known to the
  platform and the employer, the platform could reliably recommend
  candidates to optimize its goals while simultaneously meeting the
  employer's preferences. But, without such information, the platform
  needs to learn to recommend candidates as they apply over
  time.~\Halmos
\end{example}

This paper studies the problem faced by such a platform learning to
make {\em persuasive} recommendations to a stream of users. While
previous work has studied information design in two-sided markets ---
ranging from recommending products from third-party sellers on
e-commerce platforms like Amazon and eBay~\citep{gur2022information,
  elliott2022matching}, recommending drivers by sharing demand trend
on ride-sharing services like Uber and
Lyft~\citep{yang2019information}, to accommodation and rental
recommendations in Airbnb~\citep{ romanyuk2019cream} --- the common
assumption is that the platform knows the underlying state
distribution. Our work contributes to this literature by relaxing this
strong assumption.

\subsection{Modeling contributions}

Formally, we study a repeated persuasion setting between a sender and
a stream of receivers, where at each time $t$, the sender shares some
information correlated to some payoff-relevant \emph{state} with the
corresponding receiver. The state at each time $t$ is drawn
independently and identically from an unknown distribution, and
subsequent to receiving information about it, the newly-arriving
myopic receiver chooses an action from a finite set, generates
payoffs, and then leaves the system forever. The sender seeks to
persuade this stream of receivers into choosing actions that are
aligned with her preference by selectively sharing information about
the state at each round.

To tackle the practical challenge of making recommendations in the
absence of distributional data, we depart from the standard Bayesian
persuasion setting and consider situations where neither the sender
nor the receiver knows the distribution of the payoff relevant state.
Instead, the sender learns this distribution over time by observing
the state realizations. We adopt the assumption common in the
literature on Bayesian persuasion that the sender commits to a
signaling mechanism that, at each time step, maps the realized state
to a possibly random {\em action recommendation}. Such a commitment
assumption is well-justified for settings of interest to this work
since online platforms typically design and implement the information
sharing policy as software in advance, rendering frequent changes
unlikely. This advance design serves as a commitment device
organically.

Certainly, the sender cannot freely make arbitrary recommendations, if
the expectation is that these recommendations would influence the
receivers' actions. A natural requirement is for the sender to make
recommendations that the receiver will find optimal to follow, i.e.,
recommendations that are {\em persuasive}. This incentive
compatibility requirement can be easily justified by an application of
the revelation principle. In the case where the sender and the
receivers know the state distribution, the persuasiveness requirement
implies that, subsequent to each recommendation, the recommended
action maximizes the receiver's expected utility under the conditional
state distribution (given the recommendation). However, in the absence
of such distributional knowledge, it is not immediately clear how to
impose persuasiveness.

Our main modeling contribution addresses this issue by proposing a
natural criteria for persuasiveness when neither the sender nor the
receivers know the state distribution. The starting point of our
approach is the observation that any persuasiveness criteria directly
corresponds to a model of receivers' response on receiving a
recommendation (just as in the case of known state distribution).
Thus, by considering reasonable behavioral models for the receiver, we
develop in Section~\ref{sec:persuasiveness} a persuasiveness criterion
that centers {\em robustness\/} as a requirement in the face of
uncertainty. Specifically, our criterion requires that the sender's
recommendations are persuasive under all state distributions in a set
of ``confidence regions'' which contain the true distribution with a
given degree of confidence; these confidence regions shrink over time
as the sender observes more state realizations. This is in line with
the approach in statistics that uses confidence regions to address the
uncertainty in parameter estimates. Furthermore, this robustness
requirement naturally leads to conservative recommendations, thereby
making it likely that the recommendations will be accepted. We refer
to this notion as $\beta$-robustly persuasiveness where $1-\beta$
denotes the confidence level.

\subsection{Algorithmic contribution and regret characterization}

A sender who simply recommends the receiver's best action at the
realized state will certainly be persuasive with complete confidence
($\beta=0$), but may end up with a significant loss in her utility
when compared to her utility had she known the state distribution.
However, since the sender observes the state realizations over time,
she has the opportunity to make more profitable recommendations with
greater confidence in their persuasiveness as she obtains more
information. Thus, the sender's goal is to carefully manage this
tradeoff between the confidence in persuasiveness and her utility, and
achieve low regret against the optimal signaling mechanism with the
knowledge of the state distribution.

The primary theoretical contribution of this work is an efficient
algorithm that, with high probability, makes persuasive
recommendations and at the same time achieves vanishing average
regret. The algorithm we propose proceeds by maintaining at each time
a set of candidate state distributions, based on the observed state
realizations in the past. The algorithm then chooses a signaling
mechanism that is simultaneously persuasive for each of the candidate
distributions and maximizes the sender's utility. Due to this aspect
of the algorithm, we name it the \emph{Robustness against Ignorance}
(\leap) algorithm.

By a careful choice of the candidate set of distributions at each time
period, we show in Theorem~\ref{thm:persuasiveness} that the \leap
algorithm satisfies the $\beta$-robustly persuasiveness criterion for
$\beta = o(T)$, where $T$ is the horizon length. Furthermore,
exploiting the structure of the problem, we show in
Proposition~\ref{prop:computational-efficiency} that the \leap algorithm involves
solving a polynomially-sized (in number of states and actions) linear
program at each period. Taken together, these results establish our
algorithm's persuasiveness and its computationally efficiency.

To characterize the regret of the \leap~algorithm, we next
  undertake a brief digression, in Section~\ref{sec:cost-of-rp}, into
  studying the (static) problem of robust persuasion. Specifically, we
  study a static persuasion setting with known state distribution, but
  impose the restriction that the signaling mechanism must be
  persuasive for all distributions in the neighborhood of the actual
  state distribution. For this problem, we define and analyze a
  quantity $\gap$ that measures the sender's {\em cost of robust
    persuasion}. Formally, $\gap(\mu, \mathcal{B})$ captures the loss
  in the sender's expected utility (under distribution $\mu$) from
  using a signaling mechanism that is persuasive for all distributions
  in the set $\mathcal{B}$, as opposed to using one that is persuasive
  only for the distribution $\mu$. In
  Proposition~\ref{prop:upper-bound}, we establish that, under some
  regularity conditions, the sender's cost of robust persuasion
  $\gap(\mu, \mathcal{B})$ is at most linear in the radius of the set
  $\mathcal{B}$. This is achieved via an explicit construction of a
  signaling mechanism that is persuasive for all distributions in
  $\mathcal{B}$ and achieves sender's utility close to the optimum.
  Further, we provide a matching lower bound in
  Proposition~\ref{prop:lower-bound} by carefully crafting a
  persuasion instance and using its geometry to prove a linear cost of
  robust persuasion; this instance thus serves as a lower bound
  example for robust persuasion. The characterization of the
  \emph{cost of robust persuasion} provides useful insight about the
  problem of robust persuasion, which may be of independent interest.

Using this characterization of the cost of robust persuasion, we
perform a tight regret analysis of persuasion under unknown state
distribution in Section~\ref{sec:regret-analysis}. Our positive
result, Theorem~\ref{thm:vanishing-regret}, establishes that for any
persuasion setting satisfying the aforementioned regularity
conditions, the \leap algorithm achieves $O(\sqrt{T\log T})$ regret
with high probability. Furthermore, in
Theorem~\ref{thm:regret-lower-bound}, we provide a matching
lower-bound (up to $\log T$ terms) for the regret of any algorithm
that makes persuasive recommendations. In addition to the
characterization of $\gap$ and the custom persuasion instance from
Propositions~\ref{prop:upper-bound} and~\ref{prop:lower-bound}, the
proofs of these theorems rely on concentration results for sums of
independent random vectors in Banach spaces.

Our results contribute to the work on online learning that seeks to
evaluate the value of knowing the underlying distributional parameters
in settings with repeated interactions~\citep{kleinbergL03}. In
particular, our results fully characterize the sender's \emph{value of
  knowing the state distribution} for repeated persuasion. Our
well-motivated approach to relax the strong assumption of complete
distributional knowledge in the standard persuasion setting is also
aligned with the prior-independent mechanism design
literature~\citep{dhangwatnotai2015revenue,chawla2013prior}.

%
%
%
%
 % end input ./introduction.tex
 %
% start input ./literature-survey.tex
\subsection{Literature Survey}
\label{sec:literature-survey}

Our paper contributes to the burgeoning literature on Bayesian
persuasion and information design in economics, operations research
and computer science. We refer readers to
\citep{kamenica2011bayesian,bergemann2019information} as well as \citep{candogan2020information} for
a general overview of the recent developments and \citep{dughmi2017algorithmicsurvey} for a survey from algorithmic perspective. 

\textbf{Online learning \& mechanism design.} Our
work subscribes to the recent line of work that studies the interplay of
learning and mechanism design in incomplete-information settings, in
the absence of common knowledge on the prior. We briefly discuss
the ones closely related to our work.

\citet{castiglioni2020online} focus on persuasion setting with a
commonly known prior distribution of the state but unknown receiver
types chosen \emph{adversarially} from a finite set. They show that
effective learning, in this case, is computationally intractable but
does admit $O(\sqrt{T})$ regret learning algorithm, after relaxing the
computability constraint. Our model complements theirs by focusing on
known receiver types but unknown state distributions in a
\emph{stochastic} setup. Moreover, we achieve a similar (and tight)
regret bound through a computationally \emph{efficient}
algorithm. Also relevant to us is the recent line of work on Bayesian
exploration
\citep{kremer2014implementing,mansour2015bayesian,mansour2016bayesian}
which is also motivated by online recommendation systems.  In contrast
to our setting, these models assume the prior is commonly known but
the realized state is unobservable and thus needs to be learned during
the repeated interactions.

Dispensing with the common prior itself, \citet{camara2020mechanisms}
study an adversarial online learning model where both a mechanism
designer and the agent learn about the states over time. The agent is
long-lived and is assumed to minimize her counterfactual (internal)
regret in response to the mechanism designer's policy, which is
assumed to be non-responsive to the agent's actions. The authors use a
reinforcement learning approach to mechanism design and characterize
the policy regret of the mechanism designer, taking into account the
agents' responses, relative to the best-in-hindsight fixed
mechanism. Similar to our work, the regret bounds require the
characterization of a ``cost of robustness'' of the underlying design
problem. While related, the receivers in our model are short-lived and
myopic. Furthermore, our model is stochastic rather than adversarial,
and thus a prior exists in our model. More broadly, our model is
similar in spirit to the prior-independent mechanism design
literature~\citep{dhangwatnotai2015revenue,chawla2013prior}, though
our setup is different. Moreover, our algorithm is measured by the
regret whereas approximation ratios are often adopted for
prior-independent mechanism design.

Recent works by \citet{hahn2019secretary,hahn2020prophet} study
information design in online optimization problems such as the
secretary problem~\citep{hahn2019secretary} and the prophet
inequalities~\citep{hahn2020prophet}, and propose
constant-approximation persuasive schemes. These online optimization
problems often take the adversarial approach, which is different from
our stochastic setup and learning-focused tasks. Therefore, our
results are not comparable.

\textbf{Robust persuasion:} The algorithm we propose relies crucially on robust
persuasion due to the ignorance of the prior, and as a part of
establishing the regret bounds for the algorithm, we quantify the
sender's cost of robustness. %
\citet{kosterina2018persuasion} studies a persuasion setting in the
absence of the common prior assumption. In particular, the sender has
a known prior, whereas only the set in which the receiver's prior lies
is known to the sender. Furthermore, the sender evaluates the expected
utility under each signaling mechanism with respect to the worst-case
prior of the receiver. Similarly, \citet{hu2020robust} study the
problem of sender persuading a privately informed receiver, where the
sender seeks to maximize her expected payoff under the worst-case
information of the receiver. Finally, \citet{dworczak2020preparing}
study a related setting and propose a lexicographic solution concept
where the sender first identifies the signaling mechanisms that
maximize her worst-case payoff, and then among them chooses the one
that maximizes the expected utility under her conjectured prior. In
contrast to these work, our model focuses on a setting with common,
but unknown, prior, and where the receiver has no private
information. Instead, our notion of robustness is with respect to this
unknown (common) prior.

\textbf{Safe online learning:} Our work also relates to safe
  online learning. The work by \citet{moradipari2021safe} is the most
  relevant to our work. They study a safe online learning problem
  where the linear reward and a single linear constraint depend on
  different unknown parameters. The learner has access to both the
  reward and the side information about the safety set. In this
  setting, they propose an algorithm based on linear Thompson Sampling
  and achieve the regret $O(\sqrt{T \log^3 T})$. The key difference is
  that their analysis relies on the assumption that a known safe
  action is an interior point of the safety set for all possible
  values of the unknown parameter. Under our regularity conditions, it
  is true that for every distribution there exists a signaling
  mechanism for which all the persuasiveness constraints hold strictly
  (that is, the order of the quantifiers from above is interchanged).
  However, it is unclear if this weaker assumption would be sufficient
  for their setting.

  \citet{amani2019linear} study a linear stochastic multi-armed bandit
  problem where the linear reward function and a \emph{single} linear
  safety constraint depend on an unknown parameter. Their main
  algorithm and its analysis depend on knowing (a lower bound on) the
  safety gap, i.e., the slack in the safety constraint for the optimal
  solution under the true parameter. When the safety gap is known and
  positive (i.e., the constraint is inactive), they prove a regret of
  $O(\log T \sqrt{T})$. On the other hand, if the safety gap is known
  to be zero, they only achieve a regret of $\Tilde{O}(T^{2/3})$. They
  provide a separate algorithm for the case of an unknown safety gap
  and state a regret bound of $\Tilde{O}(T^{2/3})$. In our setting,
  there are \emph{multiple} persuasiveness constraints, and many of
  these would be active for the true distribution in nontrivial
  settings. Thus even if their work can be extended to multiple
  constraints, it may only guarantee $\Tilde{O}(T^{2/3})$ regret
  bound.

  \citet{usmanova2019safe} seek to minimize a smooth convex function
  over a set of uncertain linear constraints where both the
  coefficients and constant parameters are unknown. Although our
  problem is a specific case of theirs, our model does not meet their
  central assumption of being able to evaluate the constraints at any
  point within a small neighborhood of the feasible set. 

  Recent works by \citep{pacchiano2021stochastic, Khezeli2020SafeLS,
    moradipari2020stage, moradipari2021safe} study a similar safe
  learning problem in different contexts.
  \citet{pacchiano2021stochastic} require that at each time, the
  chosen action has an expected cost below a certain threshold.
  \citet{Khezeli2020SafeLS, moradipari2020stage} study safe learning
  where in addition to maximizing the expected reward, one requires
  the reward to be above a threshold with high probability. In these
  settings, the objective and the constraint are aligned. Our setup is
  different because the sender's and the receivers' preferences,
  corresponding respectively to the objective and constraints, need
  not be aligned with each other. Most importantly, all these work
  impose a single constraint at each round, whereas our persuasiveness
  condition requires multiple constraints at each round.

  \textbf{Online linear/convex optimization:} Since the persuasion
  problem can be posed as a linear program, our work also relates to
  the online convex optimization problem. Mostly, the focus here is on
  adversarial setting where the loss function (objective) is
  adversarially chosen and revealed at the end of each time period.
  Some papers~\citep{cao2019time,mahdavi2013stochastic} focus on the
  stochastic setting, but either study an unconstrained
  problem~\citep{cao2019time} or study a batch algorithm rather than
  an online algorithm~\citep{mahdavi2013stochastic}. Focusing on the
  constraints, and using the terminology of~\citep{kim2023online},
  these work typically consider either a {\em long-term constraint}
  formulation~\citep{yu2017online, mahdavi11, neely2017online,
    yi2021regret, kim2023online, cao2018online}, or consider a {\em
    cumulative constraint} formulation~\citep{yuan18, yi2022regret,
    guo2022online}. The \emph{long-term constraint} formulation
  requires feasibility on average in the long run. Such constraints
  are reasonable in applications where the constraints are on
  aggregate quantities, such as budgets in online
  advertising~\citep{liakopoulos2019cautious}, covering constraints in
  sensor networks, capacity constraints in online
  routing~\citep{agrawal2014bandits}, etc. However, this type of
  constraint is not reasonable in our setting as it would permit the
  sender to make poor recommendations in some rounds as long as it can
  be compensated by good recommendations in other rounds. In contrast,
  the \emph{cumulative} constraint formulation focuses on bounding the
  sum of the positive-parts of the constraints (which require some
  quantity to be non-positive). This formulation is equivalent to our
  formulation if the cumulative constraint can be made zero. However,
  most previous work allow for some constraint violation and seek to
  bound the order of the violations. In the presence of such
  violations, our formulation is stronger.

     Finally, by characterizing the
persuasion problem as a Stackelberg game between the sender's choice
of a signaling mechanism and the receiver's subsequent choice of an
action, our work is related to the broader work on the
characterization of regret in repeated Stackelberg
settings~\citep{balcan2015commitment, dong2018strategic,
  chen2019learning}.
%
%
%
%
%
%

%
%
%
%
%
%
%

%
%
%
%
%
%

%
%
%
%
%
%
%
%
%

%
%
%
%
 % end input ./literature-survey.tex
 %
% start input ./model.tex
\section{Model}
\label{sec:model}

Consider a persuasion setting with a single long-run \textit{sender}
persuading a stream of homogeneous \textit{receivers} who
arrive sequentially over a time horizon of length $T$. At each time
$t \in [T] = \{0, \cdots, T-1\}$, a state $\omega_t \in \Omega$ is
drawn independently and identically from a state distribution
$\mu^* \in \Delta(\Omega)$. (Here, for any finite set $X$, $\Delta(X)$
denotes the set of all probability distributions over $X$.) We focus
on the setting where $\Omega$ is a known finite set, however the
distribution $\mu^*$ is \textit{unknown} to both the sender and the
receivers. To capture the sender's initial knowledge (before time
$t=0$) about the distribution $\mu^*$, we assume that the sender knows
that $\mu^*$ lies in the set $\mathcal{B}_0 \subseteq \Delta(\Omega)$.

At each time $t \in [T]$, the sender observes the realized state
$\omega_t$, and shares with the arriving receiver an \textit{action
  recommendation} $a_t \in A$ (chosen according to a \textit{signaling
  algorithm}, as described below), where $A$ is a finite set of
actions available to the receivers. The receiver then chooses an
action $\hat{a}_t \in A$ (not necessarily equal to $a_t$). This
results in the receiver obtaining a utility $u(\omega_t, \hat{a}_t)$
and the sender obtaining a utility $v(\omega_t, \hat{a}_t)$. Without
loss of generality, we assume that $v(\omega, a) \in [0,1]$ for all
$\omega \in \Omega$ and $a \in A$. Further, to avoid trivialities, we
assume $|\Omega| \geq 2$ and $|A|\geq 2$. We refer to the tuple
$\instance = (\Omega, A, u, v, \mathcal{B}_0)$ with
$u: \Omega \times A \to \reals$ and $v: \Omega \times A \to [0,1]$ as
an \textit{instance} of our problem.

Before we proceed, we make few remarks on the persuasion instance.
First, the preceding description does not specify a model of the
receivers' actions $\hat{a}_t$. As we discuss below in
Section~\ref{sec:persuasiveness}, this issue is intertwined with the
{\em persuasiveness constraints\/} that we impose on the sender's
signaling algorithm, and hence, we postpone the discussion until then.
Second and relatedly, while we have assumed that that sender shares
information in the form of actions recommendations, under the
persuasiveness constraints we consider it can be shown that this is
without loss of generality. Third, while our definition of an instance
assumes that the receivers are homogeneous, it can be extended to
allow for heterogeneity of receivers' utility; our results continue to
apply in the setting where the receivers' types are observable to the
sender. Finally, we assume that the sender knows the
  receivers' utility. This is justified in the context of our
  applications of interest, namely online platforms, where given the
  scale, the platform may have good estimates about user utility from
  extensive user-level data.

Informally, given a persuasion instance $\instance$, the sender's goal
is to systematically make action recommendations such that her
long-run total utility is maximized. We now describe the formal
algorithmic aspects of the sender's goal.

As each time $t$, the sender chooses an action recommendation $a_t$
based on the past state realizations, the past action recommendations
as well as the past actions chosen by the receivers. To separate the
historical information from that about the present, we define the {\em
  history} $h_t$ at the beginning of time $t$ as follows:
$h_t = \cup_{\tau < t} \{ (\tau, \omega_\tau, a_\tau, \hat{a}_\tau)\}$
(with $h_0 = \emptyset$), and note that the sender observes
$(h_t, \omega_t)$ prior to making the recommendation $a_t$ at time
$t$. We also note that, since the receivers do not know the state
distribution $\mu^*$, neither the past actions recommended by the
sender nor the past actions chosen by the receivers carry any
information about $\mu^*$ beyond that contained in the state
realizations. Thus, the part of the history that is relevant to the
sender consists of only the state realizations until time $t$.

A \textit{signaling algorithm} $\alg \equiv \alg(\instance)$ for the
sender specifies, at each time $t \in [T]$ and after any history $h_t$
and state $\omega_t$, a probability distribution
$\sigma^\alg(h_t, \omega_t, \cdot) \in \Delta(A)$ over the set of
actions. (We sometimes drop the superscript $\alg$ when it is clear
from the context.) Specifically, once the state $\omega_t$ is
realized, the sender draws the action recommendation $a_t$
independently according to the distribution
$\sigma(h_t, \omega_t, \cdot) \in \Delta(A)$. Thus, the probability
that the sender recommends an action $a \in A$ is given by
$\sigma(h_t, \omega_t, a)$. Implicitly, the notion of a signaling
algorithm reflects the assumption that the sender {\em commits\/} to a
mechanism for sending recommendations.

Given an instance $\instance$ and a signaling algorithm
  $\alg$, the sender's total (realized) utility is given by
  \begin{align*}
    V_\instance(\alg, T) \defeq \sum_{t \in [T]} v(\omega_t, \hat{a}_t).
  \end{align*}
  Thus, to evaluate the performance of a signaling algorithm, we need
  a model of the receivers' response subsequent to receiving the
  action recommendations. Rather than directly specifying such a
  response model, we instead model conditions on the signaling
  algorithm $\alg$ which result in {\em obedient\/} responses from the
  receivers, i.e., which lead each receiver to choose the action
  recommended: $\hat{a}_t = a_t$.
  Any such condition on the signaling algorithm $\alg$ implies a model
  of receivers' response, and the converse can be assumed without loss
  of generality by invoking incentive compatibility and the revelation
  principle. Henceforth, we refer to such a condition as a {\em
    persuasiveness criterion\/}.

  To motivate these persuasiveness criteria on the signaling
  algorithms, we first discuss the setting where the sender and the
  receivers commonly know the state distributions. This setting will
  also serve as a benchmark to compare the performance of any
  signaling algorithm satisfying certain persuasiveness requirements.

\subsection{Benchmark: Known State Distribution}%
\label{sec:known-dist-benchmark}

Consider the setting where the sender and the receivers commonly know
the state distribution $\mu^* = \mu \in \Delta(\Omega)$. In this
setting, each receiver responds by choosing the action that maximizes
her expected utility under the posterior belief about the state given
the action recommendation. In particular, the sender's problem
decouples across time periods, and standard
results~\citep{kamenica2011bayesian,bergemann2019information,
  dughmi2019algorithmic} imply that the sender's problem at each
period can be formulated as a linear program.

To elaborate, fix a time $t \in [T]$ and history $h_t$, and consider
the persuasion problem between the sender and the arriving receiver.
Recall that $\sigma(h_t, \omega, a)$ denotes the probability with
which the sender recommends action $a_t = a$ if the realized state is
$\omega_t = \omega$. We refer to
$\sigma[h_t] \defeq ( \sigma(h_t, \omega, a) : \omega \in \Omega, a\in
A)$ as the {\em signaling mechanism\/} at time $t$, and drop the
dependence on $h_t$ if the context is clear. Finally, let
$\mathcal{S} = \{ \sigma : \sigma(\omega, \cdot) \in \Delta(A) \text{
  for each $\omega \in \Omega$}\}$ denote the set of all signaling
mechanisms.

A signaling mechanism $\sigma \in \mathcal{S}$ is \textit{persuasive},
if conditioned on receiving an action recommendation $a \in A$, it is
indeed optimal for the receiver to choose action $a$. Let $a \in A$ be
an action with
$\sum_{\omega \in \Omega} \mu(\omega)\sigma(\omega, a) > 0$. Upon
receiving the recommendation $a$, the receiver's posterior belief that
the realized state is $\omega$ is given by Bayes' rule as
$\tfrac{\mu(\omega) \sigma(\omega, a)}{\sum_{\omega' \in \Omega}
  \mu(\omega') \sigma(\omega', a)}$, and hence
$\sum_{\omega\in \Omega} \left( \frac{\mu(\omega) \sigma(\omega,
    a)}{\sum_{\omega' \in \Omega} \mu(\omega') \sigma(\omega',
    a)}\right) u(\omega, a')$ denotes her expected utility of choosing
action $a' \in A$ conditioned on receiving the recommendation $a$. For
the receiver's expected utility to be maximized from choosing action
$a$, we need
$\sum_{\omega \in \Omega} \mu(\omega) \sigma(\omega, a) \left(
  u(\omega, a) - u(\omega, a')\right) \geq 0$ for all $a' \in A$.
Since the inequality is trivially satisfied if
$\sum_{\omega \in \Omega}\mu(\omega)\sigma(\omega,a) = 0$, the set of
persuasive mechanisms $\pers(\mu)$ is given by
\begin{align}
  \label{eq:persuasive-signals}
  \pers(\mu) \defeq \left\{ \sigma \in \mathcal{S} : \sum_{\omega \in
  \Omega} \mu(\omega) \sigma(\omega, a) \left( u(\omega, a) -
  u(\omega, a')\right) \geq 0, ~\text{for all $a, a' \in A$}\right\}.
\end{align}
We note that the set $\pers(\mu)$ is a convex polytope for all
$\mu \in \Delta(\Omega)$. Furthermore, the set $\pers(\mu)$ is
non-empty, since it always contains the ``full-information mechanism''
which recommends the receiver's optimal action at each state.

Given a persuasive signaling mechanism $\sigma \in \pers(\mu)$, the
receiver is incentivized to choose the recommended action. Assuming
ties are broken in favor of the recommended action, the sender's
expected utility is given by
\begin{align*}
  V(\mu, \sigma) &\defeq \sum_{\omega \in \Omega} \sum_{a \in A} \mu(\omega) \sigma(\omega, a) v(\omega, a).
\end{align*}
Since $V(\mu, \sigma)$ is linear in $\sigma$, the problem of selecting
an optimal persuasive signaling mechanism is given by the following
linear program:
\begin{align}\label{opt:known-prior-lp}
  \opt_\instance(\mu) \defeq  \max_\sigma V(\mu, \sigma),\ \text{subject to $\sigma \in \pers(\mu)$}.
\end{align}
Finally, letting $\sigma^*$ denote an optimal signaling mechanism to
the preceding optimization problem, the algorithm $\alg$ that sets
$\sigma^\alg(h_t, \omega_t,a) = \sigma^*(\omega_t, a)$ after any
history $h_t$ optimizes the sender's total expected utility when the
state distribution is known, with total expected utility given by
$T\cdot \opt_\instance(\mu)$.

\subsection{Persuasiveness Criterion: Unknown Distribution}%
\label{sec:persuasiveness}

We now return to the setting with unknown state distribution, and
discuss refined persuasiveness conditions on the signaling algorithm
under which the receivers' response can be reasonably assumed to equal
the recommendation. In particular, we propose and motivate a condition
on the signaling algorithm, namely the {\em robust persuasiveness\/}
criterion as described in Definition~\ref{def:robust-persuasiveness},
and provide detailed justification supporting the notion.

We begin with the simplest criterion inspired from the known
distribution setting. As the sender observes the past state
realizations, the empirical distribution $\gamma_t$, with
$\gamma_t(\omega) \defeq \frac{1}{t} \sum_{\tau < t }\ind\{
\omega_\tau = \omega\}$, provides an estimate for the unknown
distribution $\mu^*$. A natural first idea, which we call the {\em
  naive criterion\/}, simply requires the algorithm to act as if this
estimate is exact:
\begin{definition} A signaling algorithm $\alg$ satisfies the
  \emph{naive criterion} if each $\sigma^\alg[h_t]$ is persuasive
  under the empirical distribution at time $t$, i.e.,
  $\sigma^\alg[h_{t}] \in \pers(\gamma_t)$ for all $ t \in [T]$.
\end{definition}

The naive criterion can be motivated through a particular behavioral
model of the receivers involving social learning. Specifically,
consider a platform setting where each receiver (i.e., a user) arrives
with an uninformative Haldane prior~\citep{haldane,villegas,jaynes}
over the state distribution $\mu^*$, and observes all the past state
realizations. The latter holds if we assume there is social learning
among the receivers, where each receiver leaves a feedback that is
read by all subsequent receivers. Then, at each time $t$ the
corresponding receiver's belief about the state would be exactly the
empirical distribution $\gamma_t$, and thus the receiver would
optimally accept the recommendation made by the platform if it uses a
signaling algorithm satisfying the naive criterion.

However, from a practical perspective, the preceding model makes very
restrictive assumptions. First, in a platform setting, the users'
prior belief over $\mu^*$, if such a prior exists at all, is unlikely
to be known to the platform, and need not be same across different
users (let alone be the uninformative Haldane prior). Second, even
with social learning, the users typically would not observe all the
past state realizations (or even just the empirical distribution);
this is because not all users leave reviews in a platform, and a user
would typically read only a subset of available reviews. Thus, under a
realistic model of social learning, the receivers' belief about the
state would be in general different from the empirical distribution. 

In addition to relying on restrictive behavioral assumptions, there
are other deficiencies with the naive criterion that render it
ill-suited as a criterion for ensuring persuasiveness. First, the
naive criterion is especially weak in the initial stages of persuasion
due to the lack of sufficient data; at these initial stages, the
constraint based on the empirical distribution may not constrain the
sender's recommendations. For instance, if the empirical distribution
at the beginning happens to be skewed and concentrates on very few
states, then the naive criterion imposes no restriction on the action
recommendations at any previously unseen state since it has zero
empirical probability. Second, an algorithm satisfying the naive
criterion may still make {\em inconsistent} recommendations across
time. That is, for such an algorithm, there may not exist a single
belief $\mu$ for which the recommendations as a whole are persuasive,
i.e., $\sigma^\alg[h_t] \in \pers(\mu)$ for all $t$. Any such belief
$\mu$, if it exists, provides a justification for the signaling
algorithm, and larger the set of such beliefs the stronger is the
justification. For instance, the ``full-information'' signaling
algorithm $\full$, which always recommends the receivers' best action
$a_t \in \argmax_{a \in A} u(\omega_t, a)$ after any history $h_t$,
has the strongest justification since all beliefs
$\mu \in \Delta(\Omega)$ satisfy $\sigma^\full[h_t] \in \pers(\mu)$.
On the other hand, one can easily construct examples where an
algorithm satisfying naive criterion fails to have even a single
belief justifying it, due to inconsistencies in recommendations across
different periods.

Summarizing, the primary reason for the weaknesses of the naive
criterion is its reliance on the point estimate $\gamma_t$ in the
place of receivers' inherently uncertain beliefs about the state. Even
for basic inferential tasks, such point estimates are seldom
sufficient. Without explicitly incorporating this uncertainty into its
conditions, an algorithm would provide no {\em confidence} that the
receivers will accept and act according to the recommendations. To
remedy these weaknesses, we propose the following criterion that
embraces the notion of robustness in its conditions.
\begin{definition}\label{def:robust-persuasiveness} Given $\beta \geq 0$, a signaling algorithm $\alg$
  is \emph{$\beta$-robustly persuasive}, if there exists
  (history-dependent) sets $\mathcal{C}_t \subseteq \mathcal{B}_0$ for
  all time $t$, such that
  \begin{enumerate}
  \item \textbf{Robustness:} The signaling mechanism
    $\sigma^\alg[h_t]$ is persuasive for \emph{all} beliefs in the set
    $\mathcal{C}_t$: for each $t \in [T]$, we have
    \begin{align*}
      \sigma^\alg[h_t] \in \pers(\mathcal{C}_t) \defeq \cap_{\mu \in \mathcal{C}_t} \pers(\mu).
    \end{align*}
      \item \textbf{Coverage:} The sets $\mathcal{C}_t$ all contain the true state
    distribution $\mu^*$ with high probability:
    \begin{align*}
      \prob_{\mu^*} \left( \cap_{t \in [T]} \mathcal{C}_t \ni \mu^*\right) \geq 1 -\beta.
    \end{align*}
    (Here, $\prob_{\mu^*}$ represents the probability with respect to
    the (unknown) distribution $\mu^*$ and any independent
    randomization in the algorithm.)
  \end{enumerate}
\end{definition}
The first condition in the criterion enforces robustness, requiring
that the signaling mechanism at time $t$, $\sigma^\alg[h_t]$, is
persuasive with respect to {\em all} beliefs in the set
$\mathcal{C}_t$. These sets implicitly capture the uncertainties
regarding the receivers' beliefs, and by depending on the history,
reflect any learning occurring over time. (We note that the set
$\pers(\mathcal{C}_t)$ is indeed non-empty, as it contains the
``full-information'' mechanism.) The second condition in the criterion
requires these sets to have good coverage properties, i.e., these sets
contain the state distribution $\mu^*$ with high probability.

To further motivate the criterion, we delve a bit into the perspective
of social learning in a platform setting mentioned earlier. Here,
while it is a strong assumption to require the receivers to know the
exact empirical distribution, it is fair to assume that the receivers
observe (summary statistics about) a sizeable proportion of past state
realization. In particular, many common empirical principles, such as
the ``90-9-1 rule''~\citep{antelmi2019characterizing,van20141}, posit
that a constant fraction of the users leave feedback in the platform.
In this context, a receiver who starts with some sufficiently diffuse
prior over $\mu^*$, and who learns from past (incomplete) feedback,
will have a belief about the state that is close enough to the
empirical distribution. Thus, a signaling algorithm that makes
recommendations that are persuasive for {\em all} beliefs close to the
empirical distribution would ensure that such a receiver would find it
optimal to follow the recommended action. Our proposed criterion, by
using a robustness approach, abstracts away from the details of such
an explicit model, and captures the receivers' response through the
uncertainty sets $\mathcal{C}_t$.

Observe that as long as the sets $\mathcal{C}_t$ contain the empirical
distribution $\gamma_t$, the preceding criterion is stronger than the
naive criterion. More importantly, in addition to capturing more
realistic models of social learning, the coverage and the robustness
conditions together also overcome the other inadequacies of the naive
criterion that we discussed above. To see this, note that, at the
initial stages $t$ when the data is insufficient, good coverage
requires the set $\mathcal{C}_t$ to be large, and thus the action
recommendations are severely constrained (even at the states that have
not been realized), unlike the case with the naive criterion.
Similarly, the robustness ensures that any belief
$\mu \in \cap_{t \in [T]} \mathcal{C}_t$ provides a justification for
the signaling algorithm, thus precluding any inconsistencies across
time. In particular, with probability at least $1- \beta$, the true
state distribution $\mu^*$ justifies all the recommendations made by a
$\beta$-robustly persuasive signaling algorithm:
$\prob_{\mu^*}\left(\sigma^\alg[h_t] \in \pers(\mu^*) \text{ for all
    $t \in [T]$}\right) \geq 1- \beta$.

The parameter $\beta$ in the criterion plays the same role as that
played by significance level in inference. In particular, low values
of $\beta$ correspond to high level of confidence in the uncertainty
sets $\mathcal{C}_t$. Finally, it is easy to see that $\beta$-robustly
persuasive algorithms exist for any $\beta \geq 0$; in fact, choosing
the sets $\mathcal{C}_t = \mathcal{B}_0$ for all $t \in [T]$, it
follows that the algorithm $\full$ is $0$-robustly persuasive.

Given the preceding discussion, we hereafter assume that for any
signaling algorithm $\alg$ that is $\beta$-robustly persuasive for
some (small) $\beta \geq 0$, the receivers' response $\hat{a}_t$
equals the action recommendation $a_t$ at each time $t$. Thus, for any
such algorithm $\alg$, the sender's total utility reduces to
$V_{\instance}(\alg, T) = \sum_{t \in [T]} v(\omega_t, a_t)$.

\subsection{Sender's Learning Problem}

Finally, we describe the evaluation metric for the performance of any algorithm satisfying the
preceding persuasiveness criterion by comparing the sender's utility
$V_\instance(\alg, T)$ against the known-distribution benchmark given
by $T \cdot \opt(\mu^*)$. Specifically, we measure the sender's
\textit{regret} from using a $\beta$-robustly persuasive algorithm
$\alg$ by
\begin{align}
  \label{eq:regret}
  \regret_\instance(\alg, T, \mu^*) &\defeq T \cdot \opt_\instance(\mu^*)   - V_\instance(\alg,  T) = T \cdot \opt_\instance(\mu^*)   - \sum_{t \in [T]} v(\omega_t, a_t).
\end{align}
We are now ready to formalize the sender's learning problem. Begin by
noticing that one must require the signaling algorithm $\alg$ to be
$\beta$-robustly persuasive for some small $\beta$ in order for the
second equality above to hold, i.e., for the receivers' responses to
match the recommendations. At the same time, $0$-robustly
persuasiveness is an excessive requirement, with no hope of resulting
in a sub-linear regret. (In Appendix~\ref{ap:0-persuasive}, we present
an example instance where any $0$-robustly persuasive algorithm
necessarily obtains a linear regret.) Thus, the central problem is to
design, for any given instance $\instance$, an algorithm $\alg$ that
is $\beta$-robustly persuasive for small (vanishing) $\beta$ and
simultaneously achieves sublinear regret with high probability.

%
%
%
%
 % end input ./model.tex
 %
%
% start input ./upper-bound.tex
\section{The Robustness Against Ignorance (\leap) Algorithm}%
\label{sec:leap}

Having described the learning problem faced by the sender, in this
section, we present a signaling algorithm that we call the Robustness
Against Ignorance (\leap) algorithm. Here, we show that the
\leap~algorithm is $\beta$-robustly persuasive with $\beta = o(1)$,
relegating the regret analysis to Section~\ref{sec:regret-analysis}.

Before describing our proposed algorithm, we briefly motivate our
design approach. Observe that if the state distribution $\mu^*$ is
known, then the sender's problem is given by the linear
program~\eqref{opt:known-prior-lp}, and thus the optimal signaling
mechanism can be efficiently computed. Thus, a natural learning
approach is to solve at each time $t$ the estimated version of the
LP~\eqref{opt:known-prior-lp}, where the unknown distribution $\mu^*$
is replaced by the empirical distribution $\gamma_t$, and use the
corresponding optimal signaling mechanism for that time period.
However, this alone is not sufficient to obtain an algorithm that is
$\beta$-robustly persuasive, which requires the signaling mechanisms
to be persuasive for all distributions in some small neighborhood of
$\mu^*$. To elaborate, simply solving the estimated LP may yield
solutions that are only $\epsilon$-feasible for distributions close to
the empirical distribution, i.e., some of the persuasiveness
constraints for such nearby distributions may get violated. In fact,
optimizing the estimated LP may result in a mechanism that is not
persuasive for {\em any} other distribution close to the empirical
distribution. Thus, an immediate challenge is in determining how to
use the empirical distribution estimate to find well-performing
signaling mechanisms that are persuasive (with high probability) for
{\em all} distributions in a small neighborhood around the unknown
state distribution. Part of this challenge is to carefully choose the
corresponding neighborhoods without significantly sacrificing the
performance of the mechanism.

The algorithm we propose is adaptive. An alternative is to adopt an
``explore-then-commit'' design~\citep{lattimoreS2020}, where the
algorithm uses the state realizations in the first $t$ periods (for
some carefully chosen $t$) to estimate the unknown distribution and
subsequently commits to a single signaling mechanism for the remaining
time periods. However, it is unlikely that such a algorithmic design
can achieve strong regret guarantees in our setting, since it is known
that such an approach yields the sub-optimal $O(T^{2/3})$ regret in
simple multi-armed bandit problems~\citep{lattimoreS2020}. This
observation illustrates the need for adaptivity to obtain order-wise
optimal regret.

To meet these challenges, our algorithm $\leap$ proceeds by adaptively
maintaining, at each time $t \geq 0$, a set $\mathcal{B}_{t}$ of
candidates for the (unknown) distribution $\mu^*$. This set is a
(closed) $\ell_1$-ball of radius $\epsilon_t$ at the empirical
distribution $\gamma_t$. It then selects a signaling mechanism that
maximizes the sender expected utility w.r.t.~the empirical estimate
$\gamma_t$ among mechanisms that are persuasive for all distributions
$\mu \in \mathcal{B}_t$. Finally, it makes an action recommendation
$a_t$ using this signaling mechanism, given the state realization
$\omega_t$. The \leap~algorithm is formally described in
Algorithm~\ref{alg:leap-algorithm}. Here, we use the notation
$\pers(\mathcal{B})$ to denote the set of signaling mechanisms that
are simultaneously persuasive under all distributions $\mu$ in the set
$\mathcal{B} \subseteq \Delta(\Omega)$:
$\pers(\mathcal{B}) = \cap_{\mu \in \mathcal{B}} \pers(\mu)$. We
remark that for any non-empty set
$\mathcal{B} \subseteq \Delta(\Omega)$, the set $\pers(\mathcal{B})$
is convex since it is an intersection of convex sets $\pers(\mu)$, and
is non-empty since it contains the full-information signaling
mechanism. Furthermore, we let
$\mathsf{B}_1(\mu, \epsilon) \defeq \{ \mu' \in \Delta(\Omega) : {\|
  \mu' - \mu\|}_1 \leq \epsilon\}$ denote the (closed) $\ell_1$-ball
of radius $\epsilon > 0$ at $\mu \in \Delta(\Omega)$.

\begin{algorithm}[t]
  \SetAlgoNoLine
  \KwIn{Instance $\instance$, Time horizon $T$}
  \KwPara{$\gamma_0 \in \mathcal{B}_0$, $\{ \epsilon_t > 0 : t \in [T]\}$}
  \KwOut{$a_t \in A$ for each $t \in [T]$}
  \For{$t=0$ to $T-1$}{
    Choose any $\sigma[h_{t}] \in \arg\max_\sigma \{ V(\gamma_{t}, \sigma) : \sigma \in  \pers(\mathcal{B}_{t})\}$\;
    Recommend $a_t = a \in A$ with probability
                $\sigma(\omega_t, a; h_t)$\;
    Update $\gamma_{t+1}(\omega) \gets \frac{1}{t+1} \sum_{\tau=0}^{t} \ind\{\omega_\tau = \omega\}$ for each $\omega \in \Omega$\;
    Set $\mathcal{B}_{t+1} \gets \mathsf{B}_1(\gamma_{t+1}, \epsilon_{t+1})$;
  }
  \caption{The Robustness Against Ignorance (\leap) algorithm}
  \label{alg:leap-algorithm}
\end{algorithm}

From the intuitive description, it follows that the sets
$\mathcal{B}_t = \mathsf{B}_1(\gamma_t, \epsilon_t)$ naturally play
the role of the covering sets $\mathcal{C}_t$ in the definition of
$\beta$-robustly persuasiveness. Specifically, the parameters
$\{\epsilon_t : t \in [T]\}$ control the degree of persuasiveness of
the algorithm: larger values of $\epsilon_t$ imply that the algorithm
is $\beta$-robustly persuasive for smaller values of $\beta$. (In
particular, if all $\epsilon_t$ are larger than $2$, the algorithm
reduces to the full-information algorithm $\full$, and is $0$-robustly
persuasive.) Unsurprisingly, larger values of $\epsilon_t$ also lead
to larger regret, and hence the sender must choose $\epsilon_t$ to
optimally trade-off the persuasiveness of the algorithm against its
regret.

Our first main result characterizes \leap's persuasiveness for a
particular choice of parameter values which we show in
Section~\ref{sec:regret-analysis} to be regret-optimal.
\begin{theorem}\label{thm:persuasiveness}  For each $t \in [T]$, let
  $\epsilon_t = \min\{\sqrt{\frac{|\Omega|}{t}} \left(1 + \sqrt{\Phi
      \log T}\right),2\}$ with $\Phi > 0$. Then, the \leap~algorithm
  is $\beta$-robustly persuasive with
  \begin{align*}
    \beta = \sup_{\mu^* \in \mathcal{B}_0} \prob_{\mu^*}\left( \cap_{t
    \in [T]} \mathcal{B}_{t} \not\ni \mu^*\right) \leq T^{1 - \frac{3 \Phi  \sqrt{\Omega}}{56}}.
  \end{align*}
  In particular, for $\Phi > 20$, we have $\beta \leq
  T^{-0.5}$.
\end{theorem}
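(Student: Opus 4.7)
The plan is to handle the two parts of the theorem---polynomial-time efficiency and $\beta$-persuasiveness---essentially separately. Efficiency reduces to an LP reformulation, while persuasiveness reduces, via a one-line argument, to a tail bound for the empirical distribution $\gamma_t$.

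For the efficiency claim, I need to argue that at each round $t$, the maximization $\max_\sigma\{V(\gamma_t,\sigma) : \sigma \in \pers(\mathcal{B}_t)\}$ is a polynomial-size linear program. The objective is linear in $\sigma$, and for each ordered pair $(a,a') \in A\times A$, the robust persuasiveness constraint reduces to $\min_{\mu \in \mathcal{B}_t}\sum_\omega \mu(\omega)\sigma(\omega,a)(u(\omega,a)-u(\omega,a')) \geq 0$, where $\mathcal{B}_t = \mathsf{B}_1(\gamma_t,\epsilon_t)\cap \Delta(\Omega)$ is a polytope with polynomial-size description. Since the inner minimization is itself an LP in $\mu$, I would take its dual and fold the dual variables directly into the outer program, obtaining a single polynomial-size LP in $\sigma$ together with $O(|A|^2|\Omega|)$ auxiliary variables; this LP can be solved in $\text{poly}(|\Omega|,|A|)$ time per round.

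For persuasiveness, the key observation is that by construction $\sigma^\alg[h_t]\in \pers(\mathcal{B}_t)$ at every round. Hence on the event $\mathcal{G} = \{\mu^*\in \mathcal{B}_t\text{ for all } t\in [T]\}$, every mechanism chosen by \leap lies in $\pers(\mu^*)$, which is exactly what persuasiveness demands. Because $\mu^*\in \mathcal{B}_0$ is given by assumption, a union bound over $t\geq 1$ yields
\begin{align*}
\beta \;\leq\; \sup_{\mu^*\in \mathcal{B}_0}\prob_{\mu^*}(\mathcal{G}^c) \;\leq\; \sum_{t=1}^{T-1}\sup_{\mu^*\in \mathcal{B}_0}\prob_{\mu^*}\!\bigl(\|\gamma_t-\mu^*\|_1 > \epsilon_t\bigr).
\end{align*}
Since $\gamma_t$ is the empirical distribution of $t$ i.i.d.\ samples from $\mu^*$, each inner probability is handled by a concentration inequality for the $\ell_1$ deviation of an empirical measure from its generating distribution. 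The chosen radius $\epsilon_t = \sqrt{|\Omega|/t}\,(1+\sqrt{\Phi\log T})$ splits naturally into a bias-absorbing piece $\sqrt{|\Omega|/t}\geq \expec[\|\gamma_t-\mu^*\|_1]$ and a deviation piece $\sqrt{|\Omega|\Phi\log T/t}$. Feeding the latter into a McDiarmid-style or Weissman-style tail bound makes each summand polynomially small in $T$; summing over $t$ and tracking constants gives $\beta\leq T^{1-3\Phi\sqrt{|\Omega|}/56}$, and the $\Phi>20$ corollary follows by direct substitution.

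The main obstacle I anticipate is obtaining the $\sqrt{|\Omega|}$ (rather than $|\Omega|$) exponent in the final bound: a direct application of Bretagnolle--Huber--Carol has a $2^{|\Omega|}$ prefactor that contributes an $|\Omega|$-linear term to the exponent, so matching the stated constant $3\sqrt{|\Omega|}/56$ requires a more careful route---most likely decomposing $\|\gamma_t-\mu^*\|_1$ via its expected value $\Theta(\sqrt{|\Omega|/t})$ and invoking a sharp concentration of the $\ell_1$ distance around that expectation, so that the exponential decay depends only on the deviation piece. Some bookkeeping is also needed for the saturation regime $\epsilon_t=2$, where $\mathcal{B}_t\supseteq \Delta(\Omega)$ so the corresponding tail probability is zero and those $t$ simply drop out of the union bound.
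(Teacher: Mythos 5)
Your plan is essentially correct and follows the paper's structure: reduce persuasiveness to the event $\{\mu^*\in\mathcal{B}_t\ \forall t\}$, union-bound over $t\geq 1$, discard the saturated rounds $\epsilon_t=2$, and then control $\|\gamma_t-\mu^*\|_1$ by first bounding $\expec\|\gamma_t-\mu^*\|_1\leq\sqrt{|\Omega|/t}$ and then concentrating around this expectation. You have correctly diagnosed the key obstacle, namely that a direct Bretagnolle--Huber--Carol/Weissman bound carries a $2^{|\Omega|}$ prefactor that would ruin the exponent. The two small departures from the paper are as follows. First, for the LP, the paper exploits that $\mathcal{B}_t=\mathsf{B}_1(\gamma_t,\epsilon_t)$ is an $\ell_1$-ball with $O(|\Omega|^2)$ vertices and imposes the obedience constraints at those vertices; your LP-dualization of the inner minimization is equally valid and in fact more general, at the cost of adding dual auxiliary variables. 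Second, for concentration, the paper invokes a Bernstein-type inequality for norms of sums of bounded independent random vectors (Foucart and Rauhut, Corollary 8.46), and it is the $6t$ term in the Bernstein denominator that produces the exact constant $3\sqrt{|\Omega|}/56$; a plain McDiarmid bounded-differences argument on $\|\gamma_t-\mu^*\|_1$ (changing one sample changes the $\ell_1$ distance by at most $2/t$) yields the per-round tail $T^{-\Phi|\Omega|/2}$, which is actually a \emph{stronger} exponent than $3\Phi\sqrt{|\Omega|}/56$ for all $|\Omega|\geq 2$, so the stated bound follows a fortiori even though you will not recover the paper's constant verbatim. The one imprecision in your write-up is claiming that ``tracking constants gives'' exactly $T^{1-3\Phi\sqrt{|\Omega|}/56}$ via McDiarmid or Weissman; you should instead note that your route yields a sharper exponent that dominates the stated one.
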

The proof of the persuasiveness of \leap follows by showing that the
empirical distribution $\gamma_t$ concentrates around the unknown
state distribution $\mu^*$ with high probability. Since, after any history $h_t$,
the signaling mechanism $\sigma[h_t]$ chosen by the algorithm is
persuasive for all distributions in an $\ell_1$-ball around $\gamma_t$, we
deduce that it is persuasive under $\mu^*$ as well. To show the
concentration result, we use a concentration inequality for
independent random vectors in a Banach space~\citep{foucartH13}; the
full proof is provided in Appendix~\ref{ap:persuasiveness-proof}.

We observe that to get strong persuasiveness guarantees, the choice of
$\epsilon_t$ in the preceding theorem requires the knowledge of the
time horizon $T$. However, applying the standard doubling
tricks~\citep{besson2018doubling}, one can convert our algorithm to an
{\em anytime} version that has the same regret upper bound guarantee,
at the cost of a weakened persuasiveness guarantee, where the
persuasiveness $\beta$ is weakened to a constant arbitrarily close to
$0$.

Next, note that the \leap~algorithm requires finding at each time $t$
a signaling mechanism that is persuasive for {\em all} distributions
in a neighborhood around the empirical distribution. The following
result shows that this is a simple computational task requiring a
polynomial running time. Thus, the result establishes the
\leap~algorithm's computational efficiency.
\begin{proposition}\label{prop:computational-efficiency}
  The \leap~algorithm requires solving at each time a linear program
  with size polynomial in $|\Omega|$ and $|A|$.
\end{proposition}
\proof{Proof.} To see the efficiency of the \leap algorithm, note that
at each time $t$ the algorithm has to solve the optimization problem
$\max_\sigma \{ V(\gamma_t, \sigma) : \sigma \in
\pers(\mathcal{B}_t)\}$. Since
$\mathcal{B}_t = \mathsf{B}_1(\gamma_t, \epsilon_t)$ is an
$\ell_1$-ball of radius $\epsilon_t$, it is a convex polyhedron with
at most $|\Omega| \cdot (|\Omega|-1)$ vertices. (These vertices are
all of the form
$\gamma_t + \frac{\epsilon_t}{2}\left(e_\omega - e_{\omega'}\right)$,
where $e_\omega$ is the belief that puts all its weight on $\omega$.)
By the linearity of the obedience constraints and the convexity of
$\mathcal{B}_t$, it follows that $\pers(\mathcal{B}_t)$ is obtained by
imposing the obedience constraints at priors corresponding to each of
these vertices. Since there are $O(|\Omega| + |A|^2)$ obedience
constraints for each distribution, we obtain that the optimization
problem is a polynomially-sized linear program, and hence can be
solved efficiently.~\Halmos\endproof

Having addressed the persuasiveness and the computational efficiency
of the \leap~algorithm, we devote the rest of the paper to analyzing
its regret. To do this, we first take a digression to define (and
bound) the cost of robust persuasion in static persuasion problems.
Armed with this result, we then characterize the algorithm's regret in
Section~\ref{sec:regret-analysis}.

  %
  %
  %
  %
  %
  %
  %

  %
  %
  %

%
%
%
%
 % end input ./upper-bound.tex
 %
% start input ./cost-of-robust-persuasion.tex
\section{Digression: Cost of Robust Persuasion}
\label{sec:cost-of-rp}
In this section, we consider the static persuasion problem with known
state distribution (discussed in
Section~\ref{sec:known-dist-benchmark}), and study the loss in the
sender's expected utility from requiring the signaling mechanism to be
persuasive for all distributions in a neighborhood around the state
distribution. To measure this loss, we first define the notion of the
\textit{cost of robust persuasion}, a quantity that depends on the
neighborhood, and provide upper and lower bounds under some minor
regularity conditions.

Fix a persuasion instance $\instance$. In the static setting with
known state distribution $\mu$, the sender's optimal expected payoff
is given by
$\opt_\instance(\mu) = \sup_{\sigma \in \pers(\mu)} V(\mu, \sigma)$.
Next, for any set of distributions
$\mathcal{B} \subseteq \mathcal{B}_0$, the set of signaling mechanisms
that are simultaneously persuasive for all distributions in
$\mathcal{B}$ is given by
$\pers(\mathcal{B}) = \cap_{\mu' \in \mathcal{B}} \pers(\mu')$. Hence,
the sender's optimal expected utility among all such mechanisms is
given by $\sup_{\sigma \in \pers(\mathcal{B})} V(\mu, \sigma)$. Thus,
we define the {\em cost of robust persuasion} as
\begin{align}\label{eq:gap}
  \gap(\mu, \mathcal{B})
  &\defeq
    \sup_{\sigma \in \pers(\mu)} V(\mu, \sigma) - \sup_{\sigma \in
    \pers( \mathcal{B})} V(\mu, \sigma).
\end{align}  
Thus, $\gap(\mu, \mathcal{B})$ captures the difference in the sender's
expected utility (under $\mu$) between using the optimal persuasive
signaling mechanism for the distribution $\mu$ and using the optimal
signaling mechanism that is persuasive for all distributions
$\mu' \in \mathcal{B}$.

For general persuasion instances, one can show that the cost of robust
persuasion can be severe: in Appendix~\ref{ap:regularity-failure}, we
present a persuasion instance and a distribution $\mu$ such that for
any $\epsilon >0$, the cost of being robustly persuasive for the set
$\mathsf{B}_1(\mu, \epsilon)$ of distributions satisfies
$\gap(\mu, \mathsf{B}_1(\mu, \epsilon)) = \tfrac{1}{2}$. The instance
we present there is pathological, with an action that is optimal for
the receiver at a single unique distribution. To obtain meaningful
insights on the cost of robust persuasion, we seek to exclude such
instances by imposing some regularity condition on the instances.

To state these regularity conditions, we need some notation. For each
action $a \in A$, let $\mathcal{P}_a$ denote the set of state
distributions for which action $a$ is optimal for a receiver:
\begin{align*}
  \mathcal{P}_a \defeq \left\{ \mu \in \Delta(\Omega) :
  \expec_\mu\left[u(\omega,a)\right] \geq
  \expec_\mu\left[u(\omega,a')\right], ~\text{for all $a' \in A$}\right\}.
\end{align*}
It is without loss of generality to assume that for each $a \in A$,
the set $\mathcal{P}_a$ is non-empty. (This is because a receiver can
never be persuaded to play an action $a \in A$ for which
$\mathcal{P}_a$ is empty, and hence such an action can be dropped from
$A$.)

We consider the following regularity conditions on the persuasion
instances:
\begin{assumption}[Regularity Conditions]
  The instance $\instance$ satisfies the following conditions:
  \begin{enumerate}
  \item \label{as:non-empty-interior} There exists $d > 0$ such that
    for each $a \in A$, the set $\mathcal{P}_a$ contains an
    $\ell_1$-ball of size $d$. Let $D >0$ denote the largest value of
    $d$ for which the preceding is true, and let
    $\eta_a \in \mathcal{P}_a$ be such that
    $\mathsf{B}_1(\eta_a, D) \subseteq \mathcal{P}_a$.
  \item \label{as:prior-set} There exists a $p_0 > 0$ such that for
    all $\mu \in \mathcal{B}_0$ we have
    $\min_{\omega} \mu(\omega) \geq p_0 > 0$.
  \end{enumerate}
\end{assumption}
The first condition requires that each such set $\mathcal{P}_a$ has a
non-empty relative interior; this excludes the pathological instances
like that in Appendix~\ref{ap:regularity-failure}, for which there
exists an action $a$ with $\mathcal{P}_a$ a singleton. We note that
this condition is analogous to the Slater condition in convex
optimization, imposing non-empty interior on the feasibility region to
obtain strong duality. The second condition is technical and is made
primarily to ensure the potency of the first condition: without it,
the sets $\{\mathcal{P}_a\}_{a \in A}$ may satisfy the first condition
in $\Delta(\Omega)$, while failing to satisfy it relative to the
subset $\Delta(\{ \omega : \mu(\omega) > 0\})$ for some
$\mu \in \mathcal{B}_0$. Taken together, these regularity conditions
serve to avoid pathologies, and henceforth we restrict our attention
only to those instances satisfying these regularity conditions.

  Under the regularity conditions, our first result shows that the
  cost of robust persuasion $\gap(\mu, \mathcal{B})$ is at most linear
  in the size of the set $\mathcal{B}$. 
\begin{proposition}\label{prop:upper-bound}
  For any instance that satisfies the regularity conditions, for all
  $\mu \in \mathcal{B}_0$ and for all $\epsilon \geq 0$, we have
  $\gap(\mu, \mathsf{B}_1(\mu, \epsilon)) \leq \left(\frac{4 }{p_0^2
      D}\right) \epsilon$.
\end{proposition}
The proof of the upper bound is obtained through an explicit
construction of a signaling mechanism $\widehat{\sigma}$ that is
persuasive for all distributions in the set
$\mathsf{B}_1(\mu, \epsilon)$, and by showing that the sender's
expected payoff under $\widehat{\sigma}$ is close to that under the
optimal signaling mechanism in $\pers(\mu)$. For this construction, we
first use the geometry of the instance to split the distribution $\mu$
into a convex combination of distributions that either fully reveal
the state, or are well-situated in the interior of the sets
$\mathcal{P}_a$. (It is here that we make use of the two regularity
assumptions.) We then construct the mechanism $\widehat{\sigma}$ to
induce, under prior $\mu$, the aforementioned beliefs as posteriors.
Finally, we show that for any prior $\mu'$ close enough to $\mu$, the
posteriors induced by $\widehat{\sigma}$ are close to the posteriors
induced under prior $\mu$, implying that these posteriors lie within
the sets $\mathcal{P}_a$. This proves the persuasiveness of
$\widehat{\sigma}$ for all distributions $\mu'$ close to $\mu$. We
provide the complete proof in Appendix~\ref{ap:cost-of-rp}.

Next, we provide a (worst-case) lower bound on $\gap$. We accomplish
this by carefully constructing a persuasion instance $\instance_0$
where being robustly persuasive leads to a substantial loss to the
sender. The instance $\instance_0$ has three states
$\Omega = \{ \omega_0, \omega_1, \omega_2\}$ and five actions
$A = \{a_0, a_1, a_2, a_3, a_4 \}$ for the receiver. At a high level,
the receiver's preference can be illustrated as in
Fig.~\ref{fig:belief-regions}, which depicts the receiver's optimal
action for any belief in the simplex. The regions $\mathcal{P}_i$ in
the figure correspond to the set of beliefs that induce action
$a_i \in A$ as the receiver's best response. The instance is crafted
so that the sets $\mathcal{P}_1$ and $\mathcal{P}_2$ that induce
actions $a_1$ and $a_2$ respectively are symmetric and extremely
narrow with the width controlled by an $\ell_1$-ball of radius $D$
contained within, as depicted in Fig.~\ref{fig:regularity-splitting}.
(Since $|\Omega|=3$, the $\ell_1$-ball here is a hexagon.) For
completeness, the receiver's utility is listed explicitly in
Table~\ref{tab:instance}. The sender seeks to persuade the receiver
into choosing one of actions $a_1$ and $a_2$ (regardless of the
state); all other actions are strictly worse for the sender. Formally,
we set $v(\omega, a) = 1$ if $a \in \{a_1, a_2\}$ and $0$ otherwise,
for all $\omega$. The sender's initial knowledge regarding the state
distribution is captured by the set
$\mathcal{B}_0 = \{ \mu \in \Delta(\Omega) : \min_{\omega} \mu \geq
p_0\}$, while the distribution of interest is
$\mu^*=(p_0, \frac{1-p_0}{2},\frac{1-p_0}{2})$, corresponding to the
midpoint of the tips of the sets $\mathcal{P}_i$, as shown in
Fig.~\ref{fig:regularity-splitting}. We focus on the setting where the
instance parameters $D$ and $p_0$ satisfy $Dp_0 < 1/64$. The following
proposition shows that in the instance $\instance_0$, it is costly to
require the signaling mechanism to be robustly persuasive for a set of
distributions around $\mu^*$. The result also implies that the bound
on $\gap(\cdot)$ obtained in Proposition~\ref{prop:upper-bound} is
almost tight, except for a factor of $1/p_0$.
\begin{figure}
  \begin{subfigure}{0.45\textwidth}
    \centering
% inside_import 
% before     
% ignored 
% args [width=0.8\textwidth]
% full_filename 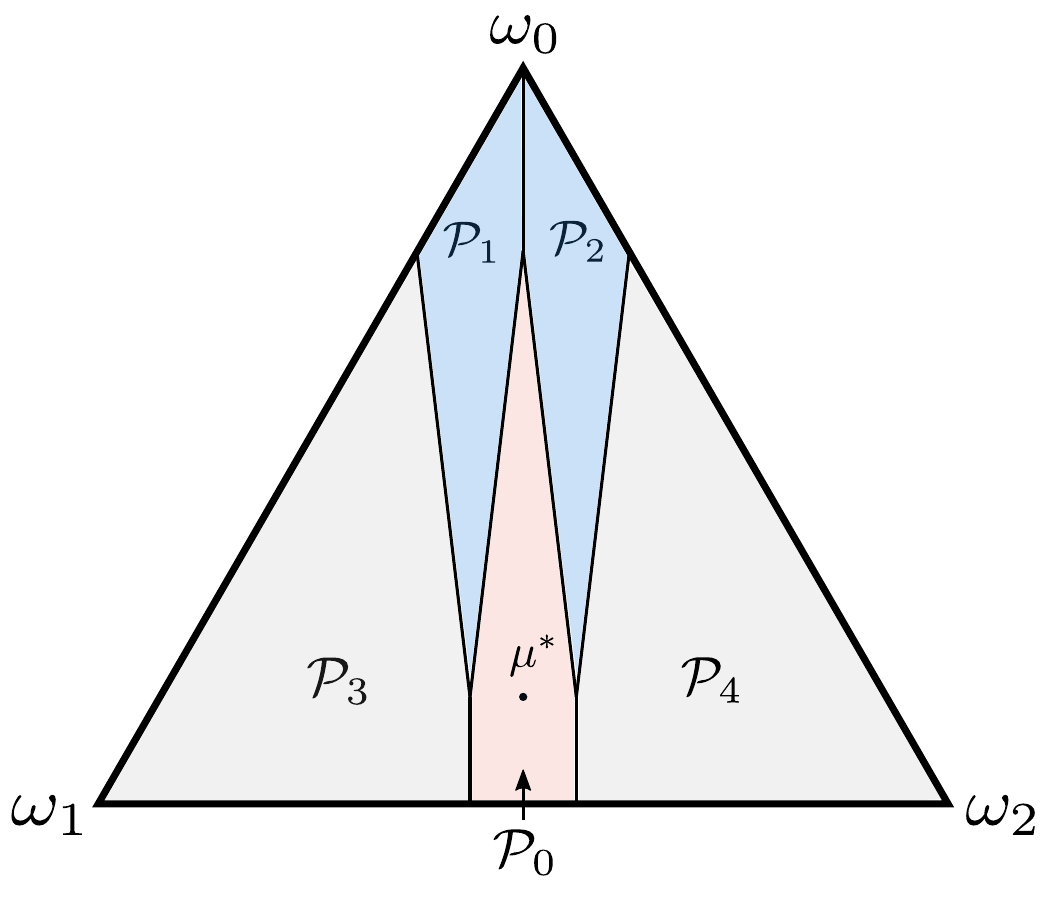
% after 
    \includegraphics[width=0.8\textwidth]{images/belief-regions.pdf}
    \caption{Receiver's preferences}%
    \label{fig:belief-regions}
  \end{subfigure}\hspace{0.05\textwidth}%
  \begin{subfigure}{0.45\textwidth}
    \centering
% inside_import 
% before     
% ignored 
% args [width=0.8\textwidth]
% full_filename 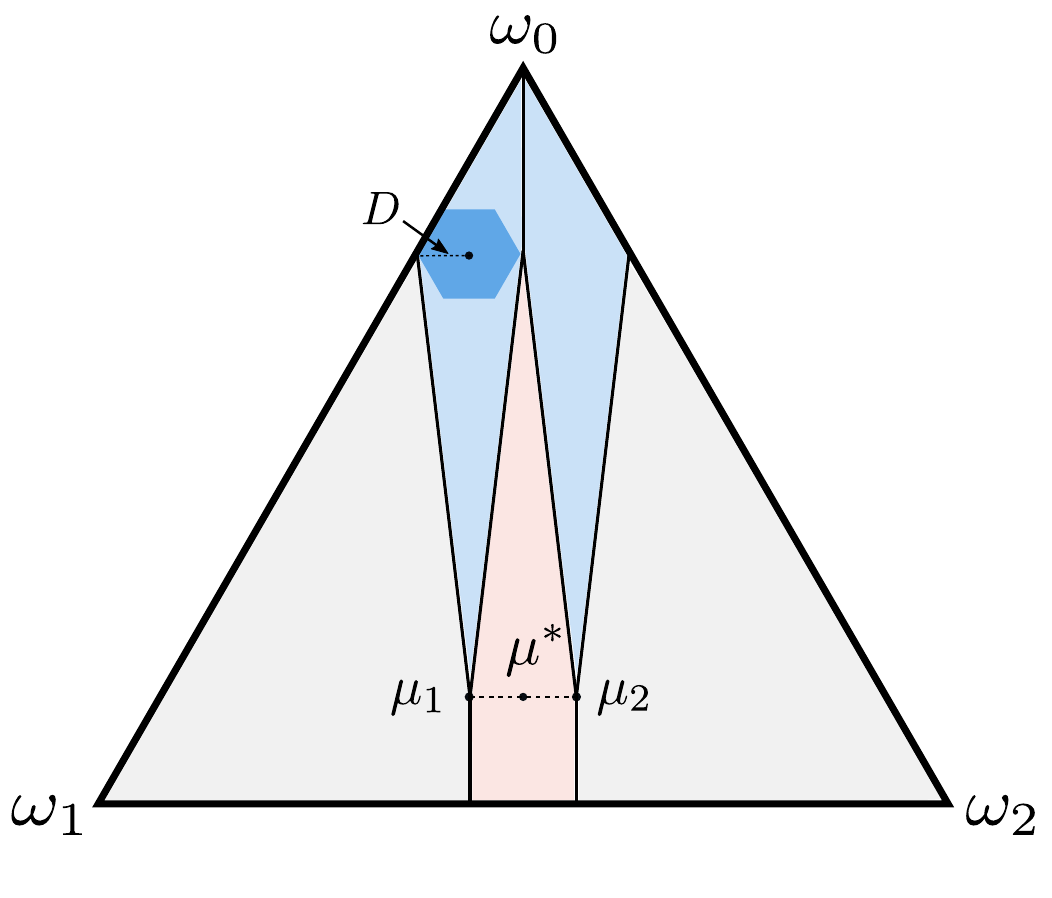
% after 
    \includegraphics[width=0.8\textwidth]{images/regularity-splitting.pdf}
    \caption{Prior $\mu^*$}%
    \label{fig:regularity-splitting}
  \end{subfigure}
  \caption{The persuasion instance $\instance_0$.}\label{fig:instance}
\end{figure}

\begin{table}
    \TABLE
    {Receiver's utility in instance $\instance_0$, with
      $u(\omega, a_0)$ normalized to $0$ for all $\omega \in\Omega$.\label{tab:instance}}
    {\begin{tabular}{cccccc}
    \hline
    \up & $a_1$ & $a_2$ & $a_3$ & $a_4$ \down\\
    \hline\up
    $\omega_0$ & {\small $2D^2$} & {\small $2D^2$} & {\small $-2D(1-p_0-2D)$} & {\small $-2D(1-p_0-2D)$} \\
    $\omega_1$ & {\small $(1-2D)(1-D)-p_0$} & {\small $(D+1)(2D-1)+p_0$} & {\small $2(1-p_0-2D)(1-D)$} & {\small $-2(1-p_0-2D)(D+1)$} \\
    $\omega_2$ & {\small $(D+1)(2D-1)+p_0$} & {\small $(1-2D)(1-D)-p_0$} & {\small $-2(1-p_0-2D)(D+1)$} & {\small $2(1-p_0-2D)(1-D)$} \down\\
    \hline
  \end{tabular}}
  {}
  \end{table}

\begin{proposition}\label{prop:lower-bound} For the instance
  $\instance_0$, we have $\opt(\mu^*) = 1$. Furthermore, for all
  $\epsilon \in (0, D)$, we have
  \begin{align*}
    \gap(\mu^*, \{\mu^*, \bar{\mu}_1, \bar{\mu}_2\} ) \geq \frac{\epsilon}{8D p_0},
  \end{align*}
  where $\bar{\mu}_1 = \mu^*+ \frac{\epsilon}{2}(e_1-e_2)$,
  $\bar{\mu}_2 = \mu^*+\frac{\epsilon}{2}(e_2-e_1)$, where the belief
  $e_i$ puts all its weight on $\omega_i$.
\end{proposition}

  We defer the rigorous algebraic proof of the lower bound to
  Appendix~\ref{ap:cost-of-rp} and present a brief sketch
  using a geometric argument here. In the instance $\instance_0$, the
 distribution $\mu^*$ can be written as a convex combination
  $\mu^* = (\mu_1+\mu_2)/2$, where $\mu_1$ and $\mu_2$ are the tips of
  regions $\mathcal{P}_1$ and $\mathcal{P}_2$ respectively (see
  Fig.~\ref{fig:regularity-splitting}). Thus, by the splitting
  lemma~\citep{aumann1995repeated}, it follows that the optimal
  signaling mechanism sends signals that induce posterior beliefs
  $\mu_1$ and $\mu_2$ leading to receiver's choice of $a_1$ and $a_2$
  respectively. Since the sender can always persuade the receiver to
  choose one of her preferred actions, we obtain $\opt(\mu^*) =1$. On
  the other hand, for a signaling mechanism to be robustly persuasive
  for all distributions $\epsilon$-close to the distribution $\mu^*$ for
  sufficiently small $\epsilon$, the posteriors for the sender's
  preferred actions $a_1, a_2$ induced by the signaling mechanism have
  to be shifted up significantly in the narrow region. Such a large
  discrepancy ultimately forces the sender to suffer a substantial
  loss in the expected payoff.

  %
  %

%
%
%
%
 % end input ./cost-of-robust-persuasion.tex
 %
% start input ./lower-bound.tex
\section{Regret Analysis}
\label{sec:regret-analysis}

We now return to the regret analysis of the online persuasion setting.
The regret bounds we establish in this section make critical use of
the characterization of the cost of robust persuasion from the
preceding section.

Our main result establishes a upper bound on the regret of the
\leap~algorithm in instances satisfying the regularity conditions.
While $p_0$ appears in our regret bound, it is not required by the
\leap~algorithm for its operation.
\begin{theorem}\label{thm:vanishing-regret} Suppose the instance
  $\instance$ satisfies the regularity condition. For $t \in [T]$, let
  $\epsilon_t = \min\{ \sqrt{\frac{|\Omega|}{t}}(1+ \sqrt{\Phi \log
    T}), 2\}$ with $\Phi > 0$. Then, for all
  $\mu^* \in \mathcal{B}_0$, with probability at least
  $1 - T^{1 - \frac{3 \Phi \sqrt{\Omega}}{56}} - T^{-8 \Phi
    |\Omega|}$, the \leap algorithm satisfies
  \begin{align*}
    \regret_\instance(\leap, \mu^* , T) \leq 2 \left( \frac{20}{p_0^2D}  + 1\right) \left(1 +  \sqrt{|\Omega|T}(1+ 2\sqrt{\Phi \log T})\right).
  \end{align*}
  In particular, the regret is of order
  $O\left( \frac{\sqrt{\Omega}}{p_0^2D}\sqrt{T\log T} \right)$ with
  high probability.
\end{theorem}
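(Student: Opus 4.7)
The plan is to decompose the per-round regret into a robust-persuasion cost and a prior-estimation cost, bound each by a constant multiple of $\epsilon_t$ on a high-probability ``good'' event, and then convert the sum of expected per-round losses into a bound on realized losses through a martingale concentration argument.

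Step one is to condition on the event $\event \defeq \{\mu^* \in \mathcal{B}_t \text{ for all } t \in [T]\}$. By Theorem~\ref{thm:persuasiveness}, $\prob_{\mu^*}(\event) \geq 1 - T^{1 - 3\Phi\sqrt{\Omega}/56}$. On $\event$, every $\sigma[h_t]$ produced by \leap is persuasive under $\mu^*$, so it is legitimate to equate the receiver's chosen action with the recommendation $a_t$ and to write $\expec[v(\omega_t, a_t) \mid h_t] = V(\mu^*, \sigma[h_t])$.

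Step two is the per-round decomposition. Let $\sigma_t^{\text{rob}} \in \arg\max\{V(\mu^*, \sigma) : \sigma \in \pers(\mathcal{B}_t)\}$. Since $\mu^* \in \mathcal{B}_t$ on $\event$, the sender's per-round regret splits as
\begin{align*}
\opt(\mu^*) - V(\mu^*, \sigma[h_t]) = \underbrace{\bigl(\opt(\mu^*) - V(\mu^*, \sigma_t^{\text{rob}})\bigr)}_{= \gap(\mu^*, \mathcal{B}_t)} + \bigl(V(\mu^*, \sigma_t^{\text{rob}}) - V(\mu^*, \sigma[h_t])\bigr).
\end{align*}
For the second piece, I add and subtract $V(\gamma_t, \sigma_t^{\text{rob}})$ and $V(\gamma_t, \sigma[h_t])$. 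The cross-difference $V(\gamma_t, \sigma_t^{\text{rob}}) - V(\gamma_t, \sigma[h_t])$ is nonpositive by the defining optimality of $\sigma[h_t]$ over $\pers(\mathcal{B}_t)$, and each of the other two pieces is bounded in absolute value by $\|\mu^* - \gamma_t\|_1 \leq \epsilon_t$ (on $\event$), using that $|V(\mu, \sigma) - V(\mu', \sigma)| \leq \|\mu - \mu'\|_1$ because $v \in [0,1]$. Invoking Proposition~\ref{prop:gap-bound} to bound $\gap(\mu^*, \mathcal{B}_t)$ linearly in the radius of $\mathcal{B}_t$ (which is $\epsilon_t$, with $\mu^*$ within an $\epsilon_t$ neighborhood), I obtain a per-round bound of the form $\bigl(\tfrac{c_1}{p_0^2 D} + c_2\bigr)\,\epsilon_t$ with small absolute constants matching the $20$ and $1$ in the theorem.

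Step three sums the per-round bounds and adds a martingale correction. Using $\sum_{t=1}^{T-1} \epsilon_t \leq \sqrt{|\Omega|T}\,(1 + 2\sqrt{\Phi\log T})$ via $\sum_{t=1}^{T-1} 1/\sqrt{t} \leq 2\sqrt{T}$ and separating the $t=0$ term (contributing at most $1$), the expected loss $\sum_t \bigl(\opt(\mu^*) - V(\mu^*, \sigma[h_t])\bigr)$ is bounded by $\bigl(\tfrac{20}{p_0^2 D} + 1\bigr)\bigl(1 + \sqrt{|\Omega|T}(1 + 2\sqrt{\Phi\log T})\bigr)$ on $\event$. To pass from expected to realized utility, the process $M_t = \sum_{\tau \leq t}\bigl(V(\mu^*, \sigma[h_\tau]) - v(\omega_\tau, a_\tau)\bigr)$ is a martingale with increments in $[-1,1]$. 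Azuma--Hoeffding gives $|M_T| \leq \sqrt{2T\cdot 8\Phi|\Omega| \log T}$, or a similarly tuned bound, with probability at least $1 - T^{-8\Phi|\Omega|}$. Combining the two events by a union bound absorbs a second copy of the same expression, explaining the outer factor of $2$ and the probability $1 - T^{1 - 3\Phi\sqrt{\Omega}/56} - T^{-8\Phi|\Omega|}$.

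The main obstacle is delivering the gap bound at the right rate. Everything else is a clean decomposition, a single-parameter sum, and a standard Hoeffding argument. What carries the entire proof is Proposition~\ref{prop:gap-bound}, whose content is to construct a signaling mechanism that remains persuasive against every prior in $\mathsf{B}_1(\gamma_t, \epsilon_t)$ while losing only $O(\epsilon_t/(p_0^2 D))$ in expected sender utility: the $1/D$ reflects how freely one may perturb recommendations toward the interior anchors $\eta_a \in \mathcal{P}_a$ without violating obedience, and the $1/p_0^2$ reflects the translation from prior perturbations to posterior perturbations. Once that proposition is granted, the remaining regret analysis is essentially mechanical.
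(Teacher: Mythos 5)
Your decomposition is valid and proves the theorem, but it is a genuinely different route from the paper's. You write the per-round loss as $\gap(\mu^*,\mathcal{B}_t)$ plus a sandwich term, and kill the middle difference $V(\gamma_t,\sigma_t^{\text{rob}}) - V(\gamma_t,\sigma[h_t]) \le 0$ by the defining optimality of $\sigma[h_t]$ over $\pers(\mathcal{B}_t)$. The paper routes instead through the surrogate optimum $\opt(\gamma_t)$ (Lemma~\ref{lem:regret-expression}), which then forces it to prove a Lipschitz bound for $\opt(\cdot)$ (Lemma~\ref{lem:opt-lipschitz}) and a gap bound evaluated at $\gamma_t$ rather than $\mu^*$ (Lemma~\ref{lem:gap-gamma}, which pays an extra constant because $\gamma_t$ may have entries as small as $p_0 - \epsilon_t$, hence the $16/p_0^2D$ and ultimately the $20$). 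Your approach avoids both lemmas and actually yields a smaller constant; it is cleaner.

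Two small repairs are needed. First, Proposition~\ref{prop:gap-bound} bounds $\gap(\mu,\mathsf{B}_1(\mu,\epsilon))$, i.e.\ the ball must be centered at the belief under which $\gap$ is evaluated. You need $\gap(\mu^*,\mathcal{B}_t)$ with $\mathcal{B}_t = \mathsf{B}_1(\gamma_t,\epsilon_t)$. The missing half-sentence: on the good event $\mu^* \in \mathcal{B}_t$, so $\mathcal{B}_t \subseteq \mathsf{B}_1(\mu^*, 2\epsilon_t)$, and since $\gap$ is monotone increasing in its second argument (a larger prior set shrinks $\pers(\cdot)$), $\gap(\mu^*,\mathcal{B}_t) \le \gap(\mu^*,\mathsf{B}_1(\mu^*,2\epsilon_t)) \le \frac{8}{p_0^2 D}\epsilon_t$. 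Second, your explanation of the outer factor of $2$ as coming from a union bound is wrong: the union bound only affects the probability, not the regret bound. The $2$ actually comes from $\sum_{t=1}^{T-1} 1/\sqrt{t} \le 2\sqrt{T}$, after which the martingale correction $2\sqrt{\Phi\log T}\sqrt{|\Omega| T}$ is folded into the same bracket by upgrading $1+\sqrt{\Phi\log T}$ to $1 + 2\sqrt{\Phi\log T}$. Neither issue affects the validity of the argument.
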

The central step in the proof is the following decomposition of the
regret, established in Lemma~\ref{lem:regret-expression} in
Appendix~\ref{ap:vanishing-regret-proof}:
\begin{align*}
  \regret_\instance(\leap, \mu^* , T)
  &\leq \sum_{t \in [T]} \gap(\mu^*, \mathsf{B}_1(\mu^*, {\|\mu^* - \gamma_t\|}_1)) + \sum_{t \in [T]} \gap(\gamma_t, \mathsf{B}_1(\gamma_t, \epsilon_t))\\
  &\quad + \sum_{t \in [T]} {\|\mu^* - \gamma_t\|}_1 +  \sum_{t \in [T]} \left(\expec_{\mu^*}[v(\omega_t, a_t)|h_{t}] - v(\omega_t, a_t)\right).
\end{align*}
Observe that on the event
$\{ \mu^* \in \cap_{t \in [T]} \mathcal{B}_t\}$, we have
${\| \mu^* - \gamma_t\|}_1 \leq \epsilon_t$. Thus, on this event, the
first two terms on the right-hand side of the preceding inequality
capture the cost of persuading robustly for all distributions in an
$\ell_1$-ball of radius $\epsilon_t$ around the distribution $\mu^*$
and its estimate $\gamma_t$. Moreover, the third term represents the
estimation error between $\mu^*$ and $\gamma_t$. Together with
Proposition~\ref{prop:upper-bound}, we thus obtain that the first
three terms are of order
$\sum_{t \in [T]} \epsilon_t = O(\sqrt{T\log T})$. Finally, the last
term, which captures the randomness in the sender's payoff, is also of
the same order due to a simple application of the Azuma-Hoeffding
inequality. The details are provided in
Appendix~\ref{ap:vanishing-regret-proof}.

\subsection{Lower bound}

In this section, we show that our regret upper bound in Theorem
\ref{thm:vanishing-regret} is essentially tight with respect to the
parameters $D,T $ (up to a lower order $\sqrt{\log T}$ factor). We
also show that the inverse polynomial dependence on $p_0$, the
smallest probability of states, is necessary though the exact order of
the dependence on $p_0$ is left as an interesting open question.
\begin{theorem}\label{thm:regret-lower-bound} For the instance
  $\instance_0$ and distribution $\mu^* \in \mathcal{B}_0$ considered in Proposition~\ref{prop:lower-bound}, there exists a $T_0 >0$ such
  that for any $T \geq T_0$ and any $\beta_T$-robustly persuasive algorithm
  $\alg$ the following holds with probability at least
  $\frac{1}{3} - 2\beta_T$:
  \begin{align*}
    \regret_\instance(\alg, T, \mu^*) = T \cdot \opt(\mu^*) - \sum_{t \in [T]} v(\omega_t, a_t) \geq \frac{\sqrt{T}}{32Dp_0}.
\end{align*}
\end{theorem}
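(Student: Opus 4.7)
The plan is to construct a small explicit persuasion instance $\instance_0$ whose prior set $\mathcal{B}_0$ contains two close priors $\mu^*$ and $\mu'$, and then to exhibit via a change-of-measure argument that any $\beta_T$-persuasive algorithm is forced, on an event of probability at least $\tfrac{1}{2}-2\beta_T$ under $\mu^*$, to use signaling mechanisms that are simultaneously persuasive for both priors. Concretely, I would take $|\Omega|=2$ with an action set tuned so that the decision regions $\mathcal{P}_a$ are thin (allowing the parameter $D$ to appear explicitly in the bound), and place $\mu^*,\mu'\in \mathcal{B}_0$ near the boundary between two such regions, with $\|\mu^*-\mu'\|_1$ of order $\sqrt{p_0/T}$ so that $T\cdot\kl{\mu^*}{\mu'}$ is bounded by a constant.

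The key probabilistic step is to show $\prob_{\mu^*}(\sigma^\alg[h_t]\in \pers(\mu^*)\cap \pers(\mu')\ \text{for all $t$})\geq \tfrac{1}{2}-2\beta_T$. Let $E_0$ be the event that $\sigma^\alg[h_t]\in \pers(\mu^*)$ for every $t$, and define $E_1$ analogously for $\mu'$. By $\beta_T$-persuasiveness, $\prob_{\mu^*}(E_0)\geq 1-\beta_T$ and $\prob_{\mu'}(E_1)\geq 1-\beta_T$. The KL chain rule gives $\kl{\prob_{\mu^*}}{\prob_{\mu'}}=T\cdot \kl{\mu^*}{\mu'}$, so Pinsker's inequality bounds $d_{\mathrm{TV}}(\prob_{\mu^*},\prob_{\mu'})\leq \sqrt{T\,\kl{\mu^*}{\mu'}/2}\leq \tfrac{1}{2}$ for the calibrated perturbation. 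Since $E_1$ is measurable with respect to the state sequence and algorithm randomness, $\prob_{\mu^*}(E_1)\geq \prob_{\mu'}(E_1)-\tfrac{1}{2}\geq \tfrac{1}{2}-\beta_T$, and a union bound yields $\prob_{\mu^*}(E_0\cap E_1)\geq \tfrac{1}{2}-2\beta_T$.

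On $E_0\cap E_1$ every $\sigma^\alg[h_t]$ lies in $\pers(\{\mu^*,\mu'\})$, so the conditional expected sender utility at each round is at most $\sup_{\sigma\in\pers(\{\mu^*,\mu'\})}V(\mu^*,\sigma)=\opt(\mu^*)-\gap(\mu^*,\{\mu^*,\mu'\})$. The geometric heart of the argument is to prove a matching \emph{lower} bound $\gap(\mu^*,\{\mu^*,\mu'\})\geq c\,\|\mu^*-\mu'\|_1/(Dp_0)$ for some universal $c$, i.e., a one-sided converse to Proposition~\ref{prop:gap-bound} for the specific instance $\instance_0$. Summing the per-round losses yields an \emph{expected} cumulative regret of order $T\cdot\|\mu^*-\mu'\|_1/(Dp_0)=\Omega(\sqrt{T}/(Dp_0))$ on $E_0\cap E_1$, and a standard Azuma-Hoeffding bound applied to the bounded martingale $v(\omega_t,a_t)-\expec_{\mu^*}[v(\omega_t,a_t)\mid h_t]$ converts this into a high-probability lower bound on the realized regret $\regret_\instance(\alg,T,\mu^*)$, with the constant $\tfrac{1}{32}$ absorbing the concentration slack.

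The main obstacle will be the instance-specific \emph{lower} bound on $\gap(\mu^*,\{\mu^*,\mu'\})$, a converse to the linear upper bound of Proposition~\ref{prop:gap-bound}. The idea is to arrange $\instance_0$ so that the optimal persuasive scheme at $\mu^*$ induces posteriors lying on the boundary of some $\mathcal{P}_a$; then, by Bayes' rule, perturbing the prior by $\epsilon$ in $\ell_1$ shifts the corresponding posterior by $\Theta(\epsilon/p_0)$, forcing any scheme persuasive at $\mu'$ to ``retreat'' into the interior of $\mathcal{P}_a$ by that amount and sacrifice utility proportional to $\epsilon/(Dp_0)$ (the $1/D$ factor coming from the thin geometry of the recommending action's decision region). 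Turning this intuition into a rigorous quantitative bound---identifying the extremal posteriors on the boundary and tracking precisely how much utility must be given up to preserve persuasiveness at $\mu'$---is the technical crux, and the instance parameters should be chosen so that $D$, $p_0$, and the perturbation magnitude combine to yield exactly the stated $\sqrt{T}/(32Dp_0)$ bound.
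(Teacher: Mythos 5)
Your high-level strategy — pick close priors, apply Pinsker and a union bound to argue that a $\beta_T$-persuasive algorithm must, with probability near $\tfrac12$ under $\mu^*$, be simultaneously persuasive for both priors, lower-bound the per-round loss via a ``cost of robustness'' lower bound, and then convert expected regret into realized regret by Azuma--Hoeffding — is essentially the route the paper takes. The paper uses three symmetrically placed priors $\bar\mu_0 = \mu^*$, $\bar\mu_1$, $\bar\mu_2$ (with $\mu^*=(\bar\mu_1+\bar\mu_2)/2$) so that $\pers(\bar\mu_1)\cap\pers(\bar\mu_2)\subseteq\pers(\mu^*)$ by linearity, but your two-prior variant works for the measure-change part, and you calibrate $\epsilon$ correctly to make $T\cdot\kl{\mu^*}{\mu'}$ a constant.

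However, there is a genuine and substantial gap: you have not proven the lower bound $\gap(\mu^*,\{\mu^*,\mu'\})\ge c\,\|\mu^*-\mu'\|_1/(Dp_0)$, which is the technical heart of the theorem (the paper's Proposition~\ref{prop:lower-bound-pers}), and you yourself call it the ``technical crux.'' Moreover, your proposed $|\Omega|=2$ instance appears incapable of delivering the required $1/D$ factor. In a two-state problem the simplex is one-dimensional, the $\mathcal{P}_a$ are intervals, and $D$ is just the length of the smallest interval. Tracking Bayes' rule, a perturbation of the prior by $\epsilon$ in $\ell_1$ forces the posterior at a boundary of $\mathcal{P}_a$ to shift inward by $\Theta(\epsilon/p_0)$, and the induced loss of sender value (the decrease in signal probability as the target posterior moves deeper into $\mathcal{P}_a$) scales like $\Theta(\epsilon)$, with \emph{no} dependence on $D$: the width $D$ enters only as a cap on how far the posterior may retreat, which is slack once $\epsilon = O(1/\sqrt{T})$. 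The $1/D$ amplification in the paper arises from a genuinely two-dimensional construction: three states, with $\mathcal{P}_1,\mathcal{P}_2$ crafted as narrow wedges whose tips meet the line through $\mu^*$, so that a perturbation perpendicular to that line forces the split posteriors to climb deep into the wedges, with the narrowness of the wedges (quantified by $D$) magnifying the loss. You would need to reproduce (or replace) this geometric construction and grind through the LP to get the explicit optimal robust signaling mechanism before the argument closes. A secondary issue: your calibration $\epsilon\sim\sqrt{p_0/T}$ (as if the KL between $\mu^*$ and $\mu'$ were $\Theta(\epsilon^2/p_0)$) gives only $\Omega(\sqrt{T}/(D\sqrt{p_0}))$; the paper perturbs in the high-probability coordinates so that $\kl{\mu^*}{\mu'}=\Theta(\epsilon^2)$, allows $\epsilon\sim 1/\sqrt{T}$, and hence reaches the stronger $\Omega(\sqrt{T}/(Dp_0))$.
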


We provide a sketch here. First the regret can be split into two
terms:
\begin{align*}
 \regret_\instance(\alg, T, \mu^*) = T \cdot \opt(\mu^*) - \sum_{t \in [T]} V(\mu^*, \sigma^\alg[h_t])+ \sum_{t \in [T]} V(\mu^*, \sigma^\alg[h_t]) - \sum_{t \in [T]} v(\omega_t, a_t) 
\end{align*}

Let $\event(\mu)$ be the event under which the signaling mechanism
  $\sigma^\alg[h_t]$ chosen by the algorithm $\alg$ after any history
  $h_t \in \event(\mu)$ is persuasive for the distribution $\mu$. Hence on the event $\event(\mu^*) \cap \event(\bar{\mu}_1) \cap
  \event(\bar{\mu}_2)$, the signaling mechanism $\sigma^\alg[h_t]$ is persuasive for all three distributions $\mu^*, \bar{\mu}_1 = \mu^* + \frac{\epsilon}{2}(e_1-e_2) $ and $\bar{\mu}_2= \mu^* + \frac{\epsilon}{2}(e_2-e_1) $. From Proposition~\ref{prop:lower-bound}, we have that on this event, the first term, which is the sender's expected loss, is no less than $T \cdot \gap(\mu^*, \{\mu^*, \bar{\mu}_1, \bar{\mu}_2 \})$. We lower bound the second term using the Azuma-Hoeffding inequality. The remaining step is to show that the probability of the event $\event(\mu^*) \cap \event(\bar{\mu}_1) \cap
  \event(\bar{\mu}_2)$ does not vanish as $T$ goes to infinity, which follows from robust persuasiveness of the algorithm $\alg$ and careful choice of $\epsilon$. The details are provided in Appendix~\ref{ap:regret-lower-bound}.

%
%
%
%
%
%
%
%
%
%
%
%
%
%
%
%
%
%
%
%
%

%
%
%
%
%
%
%
%
%
%
%
%
%
%
%
%
%
%
%
%
%
%
%
%
%
%
%
%
%
%
%
%
%

%
%
%
%
 % end input ./lower-bound.tex
 %
% start input ./conclusion.tex
\section{Conclusion}
\label{sec:conclusion}

We studied a repeated Bayesian persuasion problem where the
distribution of payoff-relevant states is unknown to the sender. The
sender learns this distribution from observing state realizations
while making recommendations to the receiver.  We propose the \leap
algorithm which persuades robustly and achieves $O(\sqrt{T\log T})$
regret against the optimal signaling mechanism under the knowledge of
the state distribution. To match this upper-bound, we construct a persuasion
instance for which no persuasive algorithm achieves regret better than
$\Omega(\sqrt{T})$. Taken together, our work precisely characterizes
the value of knowing the state distribution in repeated persuasion.

While social learning is a strong motivation for our robust
persuasiveness criterion, there are other motivations as well. For
instance, a platform concerned about its long-run reputation may want
to design a recommendation algorithm that guarantees verifiably good
quality recommendations, not just with respect to currently available
state realization data, but also with respect to any additional data
obtained in the future. An algorithm satisfying our robust
persuasiveness criterion enables such a platform to meet its goals.

While in our analysis we have assumed that the receiver's utility is
fixed across time periods, our model and the analysis can be easily
extended to accommodate heterogeneous receivers, as long as the sender
observes the receiver's type prior to making the recommendation, and
the cost of robustness $\gap$ can be uniformly bounded across
different receiver types. More interesting is the setting where the
sender must persuade a receiver with an unknown type. In such a
setting, assuming the sender cannot elicit the receiver's type prior
to making the recommendation, the sender makes a menu of action
recommendations (one for each receiver type). It can be shown the
complete information problem in this setting corresponds to public
persuasion of a group of receivers with no externality, which is known
to be a computationally hard linear program with exponentially many
constraints~\citep{dughmi2017algorithmic}. Consequently, our algorithm
ceases to be computationally efficient. Nevertheless, our results
imply that the algorithm continues to maintain the $O(\sqrt{T\log T})$
regret bound.

Our characterization of the cost of robust persuasion may be
  of independent interest. For instance, one can derive the sample
  complexity bounds for static persuasion problem when the sender only
  has access to the samples from the underlying distribution. To
  obtain a signaling mechanism that is persuasive with probability at
  least $1-\beta$ and is $\epsilon$-optimal, our characterization
  yields a sample complexity of
  $\Theta(\frac{|\Omega|+ \log (1/\beta)}{p_0^4 D^2 \epsilon^2})$. Note
  that for large enough $\epsilon$, one can simply use the
  full-information mechanism with no need for any samples.

  Our analysis highlights two main technical contributions. One is the
  characterization of the cost of robust persuasion for the underlying
  linear program and using this characterization to perform a tight
  regret analysis for the online learning problem. The former result
  heavily uses the specifics of the persuasion problem (for instance,
  the use of the splitting lemma to construct a feasible robust
  solution) whereas the latter result is more agnostic to the setting.
  Given this, we believe our approach can be extended to other online
  linear programming settings as long as one can obtain a
  characterization of the corresponding cost of robustness. Note that
  even in our persuasion setting, we had to impose the regularity
  conditions to obtain the linear bounds on the cost of robustness,
  without which the cost could be $O(1)$ and the regret would be
  linear. Whether these regularity conditions can be generalized to
  other linear settings is an interesting question for further
  investigation.

%
%
%
%

%
%
%
%
%
%
%
%
%
%

%
%
%
%
%
%

%
%
%
%
%
%
%
%
%
%
%

%
%
%
%
 % end input ./conclusion.tex

%
%
%

%

%
\begin{APPENDICES}
  %
  % start input ./appendix.tex

\section{Examples of Persuasion Instances}
\label{ap:example-instances}

In this section, we provide examples of instances that illuminate
various aspects of our theoretical results.

\subsection{Failure of the Regularity Condition}
\label{ap:regularity-failure}

We begin with an example of an instance $\instance_1$ in which the
regularity condition does not hold, and in which any
$\beta$-robustly persuasive algorithm incurs a linear regret. We establish this
by proving that in this instance, the cost of robust persuasion
$\gap(\mu^*, \mathsf{B}_1(\mu^*, \epsilon))$ is a constant independent
of $\epsilon$ for all $\epsilon>0$.

In the persuasion instance $\instance_1$, the state space is given by
$\Omega=\{ \omega_0, \omega_1, \omega_2\}$ and the receiver has four
actions $A=\{a_0, a_1, a_2, a_3\}$. The receiver's utility is given by
$u(\omega_i,a_j) = \ind\{i=j\} + \frac{1}{3} \ind\{j=3\}$ for
$i \in \{0,1, 2\}$ and $j \in \{0, 1,2,3\}$. The sender's payoff is
given by $v(\omega_i, a_j) = \ind\{j=3\}$; in other words, the sender
strictly prefers the receiver choosing action $a_3$ over any other
action in all states.  The sender's initial knowledge regarding the underlying 
state distribution is captured by $\mathcal{B}_0 = \mathsf{B}_1(\mu^*, \epsilon_0)$
for some $\epsilon_0>0$, where
$\mu^* = (\tfrac{1}{6}, \tfrac{2}{3}, \tfrac{1}{6})$.

The receiver's preferences can be depicted as in
Fig.~\ref{fig:regularity-condition-example}, with sets $\mathcal{P}_j$
for $j \in \{0, 1,2\}$ denoting the set of beliefs for which the
receiver finds it optimal to choose action $a_j$. On the other hand,
the set of beliefs for which it is optimal for the receiver to choose
the sender's preferred action $a_3$ is given by
$\mathcal{P}_3 = \{ \bar{\mu}\}$ where
$\bar{\mu} = (\tfrac{1}{3},\tfrac{1}{3}, \tfrac{1}{3})$ (the orange
central point in the figure). Since $\mathcal{P}_3$ has an empty
interior, the first regularity condition fails for the instance
$\instance_1$.

\begin{figure}[htbp]
\begin{center}
% inside_import 
% before 
% ignored 
% args [height=2in]
% full_filename 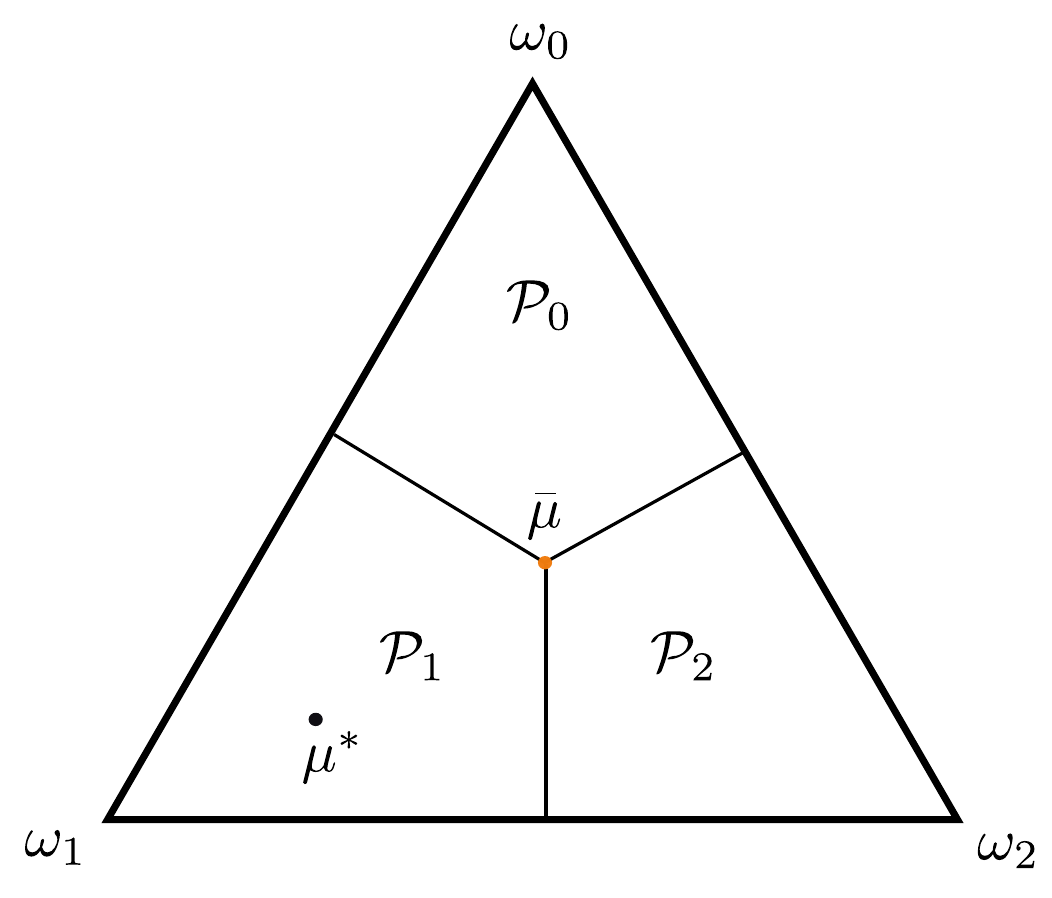
% after 
\includegraphics[height=2in]{images/constant-gap.pdf}
\caption{\centering{The persuasion instance $\instance_1$.}}
\label{fig:regularity-condition-example}
\end{center}
\end{figure}

If the distribution $\mu^*$ is known, the sender can use a signaling
mechanism that induces $\bar{\mu}$ as the posterior belief with
positive probability, causing the receiver to choose action $a_3$
leading to a positive payoff for the sender. Formally, under the distribution
$\mu^* = (\frac{1}{6}, \frac{2}{3}, \frac{1}{6})$, the optimal signaling mechanism is given by
\begin{align*}
  \sigma^*(\omega_1, a_3) &=1-\sigma^*(\omega_1, a_1) = \frac{1}{4},\\
  \sigma^*(\omega_0, a_3) &= \sigma^*(\omega_2, a_3) = 1,\\
  \sigma^*(\omega,a) &=0, \quad \text{otherwise.}
\end{align*}
Under this mechanism, the sender's utility is given by
$\opt(\mu^*) = \frac{1}{2}$.

However, for any $\epsilon>0$, the only recommendations that are
robustly persuasive for all distributions in $\mathsf{B}_1(\mu^*, \epsilon)$
are $a_0, a_1, a_2$. Thus, any signaling mechanism that is persuasive
for all distributions in $\mathsf{B}_1(\mu^*, \epsilon)$ can never recommend
the sender's preferred action $a_3$, leading to the sender's payoff of
zero. Hence, the difference in the sender's expected utility between
using the optimal persuasive signaling mechanism for the distribution $\mu^*$ and
using the optimal signaling mechanism that is persuasive for all
distributions in $\mathsf{B}_1(\mu^*, \epsilon)$ is given by
\begin{align*}
  \gap(\mu^*, \mathsf{B}_1(\mu^*, \epsilon))  &= V(\mu^*, \sigma^*) - V(\mu^*, \hat{\sigma}) =\frac{1}{2}.
\end{align*}
Thus, $\gap(\mu^*, \mathsf{B}_1(\mu^*, \epsilon))$ is a constant
independent of $\epsilon>0$. Using this bound on the cost of robust
persuasion and an argument similar to the proof of
Theorem~\ref{thm:regret-lower-bound}, one can show that the regret of
any $\beta$-robustly persuasive mechanism is of order $\Omega(T)$ with
probability at least $1/3$.

\subsection{Linear regret for $0$-robustly persuasive mechanisms}
\label{ap:0-persuasive}

In this section, we establish the necessity to consider
$\beta$-robustly persuasive mechanisms with (small) $\beta>0$ for obtaining
meaningful regret bounds. This is demonstrated by a simple example of
the persuasion instance $\instance_2$ in which any $0$-robustly persuasive
algorithm necessarily incurs a linear regret.

In the persuasion instance $\instance_2$, the state space is given by
$\Omega = \{\omega_0, \omega_1 \}$ and the receiver's action space is
given by $A = \{a_0, a_1 \}$. The receiver's utility is given by
$u(\omega_i, a_j) = \ind \{i=j\}$ for $i,j \in \{0,1\}$, i.e., the
receiver desires to ``match'' the action with the state. On the other
hand, the sender strictly prefers the receiver choosing action $a_0$
over action $a_1$ in all states, i.e.,
$v(\omega_i, a_j) = \ind \{j=0\}$ for all $i,j \in \{0,1\}$. The
sender's initial knowledge regarding the distribution is captured by
$\mathcal{B}_0 = \{ (\tfrac{1}{2}-\alpha, \tfrac{1}{2}+\alpha) \colon
\alpha \in [-\tfrac{1}{4},+\tfrac{1}{4}]\}$.

For each $i\in \{0, 1\}$, the set of beliefs for which it is optimal
for the receiver to choose action $a_i$ is given by $\mathcal{P}_i$
where $\mathcal{P}_0 = \{ (a, 1-a) \colon a \in [\tfrac{1}{2}, 1]\}$
and $\mathcal{P}_1 = \{ (a, 1-a) \colon a \in
[0,\tfrac{1}{2}]\}$. Note that the persuasion instance $\instance_2$
satisfies both the regularity conditions.

Now, since all the distributions in $\mathcal{B}_0$ are absolutely continuous
with respect to each other, any algorithm $\alg$ that is
$0$-robustly persuasive must select at each time $t \in [T]$ a signaling
mechanism $\sigma_t$ in the set $\pers(\mathcal{B}_0)$. However, it is
straightforward to verify that among all mechanisms that are persuasive
for all distributions in $\mathcal{B}_0$, the one that maximizes sender's
payoff is given by
$\hat{\sigma}(\omega_0, a_0) = 1- \hat{\sigma}(\omega_0, a_1) = 1,
\hat{\sigma}(\omega_1, a_0) = 1- \hat{\sigma}(\omega_1, a_1) =
\frac{1}{3}$. For the distribution
$\mu^* = (\tfrac{1}{2}, \tfrac{1}{2}) \in \mathcal{P}_0 \cap
\mathcal{B}_0$, it follows that the sender's payoff under
$\hat{\sigma}$ is $V(\mu^*, \hat{\sigma}) = \frac{2}{3}$.

On the other hand, since $\mu^* \in \mathcal{P}_0 \cap \mathcal{B}_0$,
the signaling mechanism that recommends action $a_0$ in both states is
persuasive for $\mu^*$, and thus achieves an expected payoff of
$\opt(\mu^*) = 1$. Thus, we deduce that for the distribution $\mu^*$, any
$0$-robustly persuasive algorithm must incur a constant regret of at least
$\tfrac{1}{3}$ at each time leading to an overall regret linear in
$T$.

\section{Proofs from Section~\ref{sec:leap}}
\label{ap:persuasiveness-proof}

This section provides the proof of Theorem~\ref{thm:persuasiveness},
along with a helper lemma establishing the concentration of the
empirical distribution around the (unknown) distribution.

\proof{Proof of Theorem~\ref{thm:persuasiveness}.}
  If $\mu^* \in \mathcal{B}_{t}$ for each $t \in [T]$, then since
  $\sigma[h_{t}]$ is persuasive under all distributions in $\mathcal{B}_{t}$,
  we deduce that $\sigma[h_{t}]$ is persuasive under the distribution $\mu^*$ for
  all $t\in [T]$. Thus, we obtain that the \leap-algorithm is
  $\beta$-robustly persuasive for
  \begin{align*}
    \beta = \sup_{\mu^* \in \mathcal{B}_0} \prob_{\mu^*}\left( \cap_{t
    \in [T]} \mathcal{B}_{t} \not\ni \mu^*\right).
  \end{align*}
  Now, for any $\mu \in \mathcal{B}_0$, using the union bound we get
  \begin{align*}
    \prob_{\mu}\left( \cap_{t\in [T]} \mathcal{B}_{t} \not\ni \mu\right)
    &= \prob_{\mu}\left( \cup_{t\in [T]} \mathcal{B}_{t}^c \ni \mu\right)\\
    &\leq \sum_{t \in [T]} \prob_{\mu}\left( \mathcal{B}_{t}^c \ni \mu\right)\\
    &= \sum_{t\in[T]} \prob_{\mu}\left( {\| \gamma_t- \mu\|}_1 > \epsilon_t\right)\\
    &= \sum_{t\in[T]} \prob_{\mu}\left( {\| \gamma_t- \mu\|}_1 > \sqrt{\frac{|\Omega|}{t}} \left(1 + \sqrt{\Phi \log T}\right) \right).
  \end{align*}
  For $t < \frac{1}{4}\Phi \log T$, we have
\begin{align*}
  \sqrt{\frac{|\Omega|}{t}} \left(1 + \sqrt{\Phi \log T}\right) > 2 \sqrt{|\Omega|} \left(1 +  \frac{1}{\sqrt{\Phi \log T} }\right) \geq 2.
\end{align*}
Hence,
$\prob_{\mu}\left( {\| \gamma_t- \mu\|}_1 > \sqrt{\frac{|\Omega|}{t}}
  \left(1 + \sqrt{\Phi \log T}\right) \right) = 0$. On the other hand,
for $t \geq \frac{1}{4}\Phi \log T$, we have
$\sqrt{\Phi \log T} \leq 2\sqrt{t}$, and hence from
Lemma~\ref{lem:better-bound-on-norm}, we obtain
\begin{align*}
  \sum_{t \geq \frac{1}{4}\Phi \log T} \prob_{\mu}\left( {\| \gamma_t- \mu\|}_1 > \sqrt{\frac{|\Omega|}{t}} \left(1 + \sqrt{\Phi \log T}\right) \right)
  &\leq \sum_{t \geq \frac{1}{4}\Phi \log T} \exp\left( - \frac{3 \Phi \log T \sqrt{\Omega}}{56}\right)\\
  &\leq  T^{-\frac{3\Phi\sqrt{\Omega}}{56}} \left( T - \frac{\Phi \log T}{4}\right)\\
  &\leq T^{1 - \frac{3\Phi\sqrt{\Omega}}{56}}.
\end{align*}
Setting $\Phi > 20$ implies that the final term is at most
$T^{-0.5}$.~\Halmos\endproof

The following lemma provides a bound on the $\ell_1$-norm of the
deviation of the empirical distribution from its mean.

\begin{lemma}\label{lem:better-bound-on-norm} For each $t \in [T]$, and
  for any $\mu \in \Delta(\Omega)$, we have for all
  $0 < \Phi_t \leq 2\sqrt{t}$,
    \begin{align*}
      \prob_{\mu} \left({\| \gamma_t - \mu\|}_1 \geq \sqrt{\frac{|\Omega|}{t}} \left(1 + \Phi_t\right)\right)
      &\leq \exp\left( - \frac{3 \Phi_t^2 \sqrt{\Omega}}{56}\right) \ind\left\{\sqrt{\frac{|\Omega|}{t}} \left(1 + \Phi_t\right) \leq 2 \right\}.
    \end{align*}
  \end{lemma}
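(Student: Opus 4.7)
The plan is to view $\gamma_t - \mu$ as a scaled sum of i.i.d. mean-zero random vectors and apply a Bernstein-type concentration inequality for sums in a Banach space, following the Foucart--Rauhut reference cited in the proof of Theorem~\ref{thm:persuasiveness}. Concretely, write $\gamma_t - \mu = \frac{1}{t}\sum_{s=0}^{t-1} X_s$ where $X_s \defeq e_{\omega_s} - \mu$, so that each $X_s$ is mean zero with $\|X_s\|_1 \leq 2$ almost surely. The proof then decomposes into four steps.

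First, I would handle the vacuous case accounting for the indicator: since $\gamma_t$ and $\mu$ are both probability distributions, $\|\gamma_t - \mu\|_1 \leq 2$ always, so whenever $\sqrt{|\Omega|/t}(1+\Phi_t) > 2$ the event on the left-hand side is empty and the bound is trivial. Second, I would bound the expectation $\expec_\mu \|\gamma_t - \mu\|_1 \leq \sqrt{|\Omega|/t}$; this follows from Jensen's inequality applied coordinatewise together with Cauchy--Schwarz, using $\expec_\mu[(\gamma_t(\omega) - \mu(\omega))^2] = \mu(\omega)(1-\mu(\omega))/t$ and $\sum_\omega \sqrt{\mu(\omega)/t} \leq \sqrt{|\Omega|/t}$.

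Third, I would apply the Banach-space concentration inequality in $(\reals^{|\Omega|}, \|\cdot\|_1)$ to get a tail bound of Bernstein type,
\begin{align*}
\prob_\mu\bigl(\|\gamma_t - \mu\|_1 \geq \expec_\mu\|\gamma_t - \mu\|_1 + r\bigr) \leq \exp\!\left(-\frac{c_1\, r^2 t}{1 + c_2\, r}\right),
\end{align*}
where the variance proxy is computed by dualizing to $\ell_\infty$: for every $f$ with $\|f\|_\infty \leq 1$, $\expec_\mu[\langle f, X_s\rangle^2] \leq 1$, giving a total variance of at most $t$, while the range constant is controlled by $\|X_s\|_1 \leq 2$. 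Setting $r = \Phi_t \sqrt{|\Omega|/t}$, so that $\expec_\mu\|\gamma_t - \mu\|_1 + r \leq \sqrt{|\Omega|/t}(1+\Phi_t)$, the indicator condition $\sqrt{|\Omega|/t}(1+\Phi_t) \leq 2$ together with $\Phi_t \leq 2\sqrt{t}$ forces $r$ to lie in the sub-Gaussian regime of the Bernstein denominator, so the linear correction can be absorbed and the exponent reduces to the stated quadratic form.

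The main obstacle will be tracking constants carefully through the Bernstein denominator to obtain the precise coefficient $3/56$: the numerator $3$ comes from the $Ku/3$ term in the Bernstein bound, while the $56$ arises from combining the standard factor of $8$ in Foucart--Rauhut's vector Bernstein inequality with the $7/3$ contributed by using the indicator condition to bound $1 + c_2 r$. Once the variance-versus-range trade-off is pinned down, the rest of the argument (vacuous case, expectation bound, and variance-proxy computation) is routine.
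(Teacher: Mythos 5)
Your proposal matches the paper's proof exactly: the paper likewise writes $\gamma_t-\mu$ as $\tfrac{1}{t}\sum_\tau Y_\tau$ with $\|Y_\tau\|_1\le 2$, bounds $\expec_\mu\|\gamma_t-\mu\|_1\le\sqrt{|\Omega|/t}$ coordinatewise via Jensen and Cauchy--Schwarz, applies \citet[Corollary 8.46]{foucartH13} in $(\reals^{|\Omega|},\|\cdot\|_1)$ with $s=\Phi_t\sqrt{|\Omega|t}$, and uses $\|\gamma_t-\mu\|_1\le 2$ to handle the indicator. The only inaccuracy is in your after-the-fact accounting of the constant: in the paper the $56$ comes from the Foucart--Rauhut denominator $4\bigl(6t+6\expec[Z_t]+s\bigr)$, simplified via $6t+6\sqrt{|\Omega|t}\le 12\sqrt{|\Omega|}\,t$ to $4\bigl(12+\Phi_t/\sqrt{t}\bigr)\le 4\cdot 14=56$, not from an ``$8\times 7/3$'' split.
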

\proof{Proof.}
  Let $X_t \in \{0, 1\}^{|\Omega|}$ denote the random variable with
  $X_t(\omega) = \ind\{\omega_t = \omega\}$, and define
  $Y_t = X_t - \expec_\mu[X_t]$.  Let
  $Z_t = {\| \sum_{\tau \in [t]} Y_\tau\|}_1$. Since
  ${\|Y_t\|}_1 \leq {\| X_t - \expec_\mu[ X_t] \|}_1 \leq 2$ for each
  $t \in [T]$, by \citet[Corollary 8.46]{foucartH13}, we obtain for
  each $t \in [T]$,
  \begin{align*}
    \prob_{\mu} \left(Z_t \geq \expec_\mu[Z_t] + s\right) \leq \exp\left(- \frac{3s^2}{4\left(6t + 6 \expec_{\mu}[Z_t] + s\right)} \right).
  \end{align*}
  Next, letting $Z_{t,\omega} = | \sum_{\tau \in [t]} Y_\tau(\omega)|$
  for $\omega \in \Omega$, we obtain
  \begin{align*}
    \expec_{\mu}[Z_t]
    &= \sum_{\omega \in \Omega} \expec_\mu [ Z_{t, \omega}]\\
    &= \sum_{\omega \in \Omega} \expec_\mu [ \sqrt{Z_{t, \omega}^2}]\\
    &\leq \sum_{\omega \in \Omega} \sqrt{\expec_\mu [ Z_{t, \omega}^2]}\\
    &= \sum_{\omega \in \Omega} \sqrt{\sum_{\tau \in [t]} \mathbf{Var}_\mu[Y_\tau(\omega)]}\\
    &= \sqrt{t} \cdot \sum_{\omega \in \Omega} \sqrt{\mu(\omega) (1 - \mu(\omega))}\\
    &\leq \sqrt{|\Omega| t},
  \end{align*}
  where the first inequality follows from Jensen's inequality, and the
  third equality follows from the fact that, since
  $\expec_\mu[Y_t(\omega)] =0$, we have
  $\expec[Z_{t, \omega}^2] =\sum_{\tau \in [t]}
  \mathbf{Var}_\mu[Y_\tau(\omega)]$. The final step follows from a
  straightforward optimization. Thus, we obtain
  \begin{align*}
    \prob_{\mu} \left(Z_t \geq \sqrt{|\Omega| t} + s\right) \leq \exp\left(- \frac{3s^2}{4\left(6t + 6\sqrt{|\Omega| t} + s\right)} \right).
  \end{align*}
  Choosing $s = \Phi_t \sqrt{|\Omega|t}$ for
  $0 < \Phi_t\leq 2\sqrt{t}$, and noting that
  $Z_t = t {\| \gamma_t - \mu\|}_1$, we obtain
  \begin{align*}
    \prob_{\mu} \left({\| \gamma_t - \mu\|}_1 \geq \sqrt{\frac{|\Omega|}{t}} \left(1 + \Phi_t\right) \right)
    &\leq \exp\left(- \frac{3\Phi_t^2|\Omega|t}{4\left(6t + 6\sqrt{|\Omega| t} +  \Phi_t \sqrt{|\Omega|t}\right)} \right)\\
    &\leq \exp\left(- \frac{3\Phi_t^2\sqrt{|\Omega|}}{4\left(12 +\Phi_t/\sqrt{t}\right)} \right)\\
    &\leq \exp\left( - \frac{3 \Phi_t^2 \sqrt{\Omega}}{56}\right).
  \end{align*}
  The lemma statement then follows after noticing that for all
  $t\in [T]$, we have
  ${\|\gamma_t - \mu\|}_1 \leq {\|\gamma_t\|}_1 + {\|\mu\|}_1 \leq
  2$.%
~\Halmos\endproof

\section{Proofs from Section~\ref{sec:cost-of-rp}}
\label{ap:cost-of-rp}

This section provides the proofs of the propositions in
Section~\ref{sec:cost-of-rp}. Throughout, we use the same notation as
in the main text.
\proof{Proof of Proposition~\ref{prop:upper-bound}.} Observe that for
  $\epsilon > \frac{p_0^2 D}{4}$, we have
  $ \frac{4\epsilon }{p_0^2 D} > 1$, and hence the specified bound is
  trivial. Hence, hereafter, we assume
  $\epsilon \leq \frac{p_0^2 D}{4}$.

  To begin, let
  $\sigma \in \arg\max_{\sigma' \in \pers(\mu)} V(\mu, \sigma')$
  denote the optimal signaling mechanism under the distribution $\mu$. Let
  $A_{+} = \{ a \in A : \sum_{\omega \in \Omega} \sigma(\omega, a) >
  0\}$ denote the set of all actions that are recommended with
  positive probability under $\sigma$.  For each $a \in A_{+}$, let
  $\mu_a$ denote the receiver's posterior belief (under signaling
  mechanism $\sigma$) upon receiving the action recommendation
  $a$. Note that since $\sigma$ is persuasive under $\mu$, we must
  have $\mu_a \in \mathcal{P}_a$. By the splitting lemma~\citep{aumann1995repeated}, it
  then follows that $\mu$ can be written as a convex combination
  $\sum_{a \in A_{+}} w_a \mu_a$ of $\{\mu_a : a \in A_{+}\}$, where
  $w_a \in [0,1]$ is given by
  $w_a= \sum_{\omega \in \Omega} \mu(\omega) \sigma(\omega, a)$.

  We next explicitly construct a signaling mechanism
  $\widehat{\sigma}$. To simplify the proof argument, the signaling
  mechanism $\widehat{\sigma}$ we construct is not a {\em
    straightforward} mechanism, in the sense that it reveals more than
  just action recommendations for signals in $S$. Using revelation
  principle, one can construct an equivalent straightforward mechanism
  $\bar{\sigma}$ by
  \emph{coalescing}~\citep{anunrojwong2020persuading} signals with the
  same best response for the signal. We omit the details of this
  reduction. We start with some definitions that are needed to
  construct the signaling mechanism $\widehat{\sigma}$.

  Let $\eta_a \in \mathcal{P}_a$ be such that
  $\mathsf{B}_1(\eta_a, D) \subseteq \mathcal{P}_a$. For
  $\delta = \frac{2\epsilon}{p_0 D} \in [0,1]$, define
  $\xi_a = (1- \delta) \mu_a + \delta \eta_a \in \mathcal{P}_a$ for
  each $a \in A_{+}$ and let $\xi = \sum_{a \in A_{+}} w_a \xi_a$.
  Furthermore, since $\mu_a \in \mathcal{P}_a$ and
  $\mathsf{B}_1(\eta_a, D) \subseteq \mathcal{P}_a$, the convexity of
  the set $\mathcal{P}_a$ implies that
  $\mathsf{B}_1(\xi_a, \delta D) \subseteq \mathcal{P}_a$.

  Since $\mu \in \mathcal{B}_0 \subseteq \relint(\Delta(\Omega))$, we
  have $\frac{1}{1-\rho}(\mu - \rho \xi) \in \Delta(\Omega)$ for all
  small enough $\rho > 0$.  Let
  $\bar{\rho} \defeq \sup\left\{ \rho \in [0, 1] : \frac{1}{1-\rho}
    (\mu - \rho \xi) \in \Delta(\Omega)\right\}$ be the largest such
  value in $[0,1]$, and define $\chi$ as
  \begin{align*}
    \chi \defeq \begin{cases} \frac{1}{1-\bar{\rho}} \left(\mu - \bar{\rho}\xi\right), & \text{if
        $\bar{\rho} < 1$;}\\
      \mu, & \text{if $\bar{\rho} =1$.}
    \end{cases}
  \end{align*}
  Then, we obtain
  $\mu = \bar{\rho} \xi + (1- \bar{\rho})
  \chi$. Furthermore, if $\bar{\rho} < 1$, we have
 \begin{align*}
   \bar{\rho}  =  \frac{{\|\chi - \mu\|}_1}{{\|\chi - \mu\|}_1 + {\| \mu - \xi\|}_1 } \geq \frac{p_0}{p_0 +  \delta},
  \end{align*}
  where the inequality follows from
  ${\| \mu - \xi\|}_1 \leq \sum_{a \in A_{+}} w_a {\| \mu_a -
    \xi_a\|}_1 = \delta \sum_{a \in A_{+}} w_a {\| \eta_a - \xi_a\|}_1
  \leq 2\delta$ and from the fact that $\chi$ lies in the boundary of
  $\Delta(\Omega)$, which implies
  ${\| \chi - \mu\|}_1 \geq 2 \min_{\omega} \mu(\omega) \geq 2 p_0$.

  With the preceding definitions in place, we are now ready to
  construct the mechanism $\widehat{\sigma}$. Let $a_\omega$ be a best
  response for the receiver at state $\omega \in \Omega$, and let
  $S = \{ (\omega, a_\omega) \in \Omega \times A : \chi(\omega) >
  0\}$. Consider the signaling mechanism $\widehat{\sigma}$, with the
  set of signals $A_{+} \cup S$, defined as follows: for each
  $\omega \in \Omega$, let
  \begin{align}\label{eq:constructed-sigma}
    \widehat{\sigma}(\omega, s)
    &\defeq \begin{cases}
      \bar{\rho} \frac{w_a \xi_a(\omega)}{\mu(\omega)}, &
      \text{for $s=a \in A_{+}$;}\\
      (1-\bar{\rho}) \frac{\chi(\omega)}{\mu(\omega)}, & \text{for
        $s = (\omega, a_\omega) \in S$;}\\
      0, & \text{otherwise.}
    \end{cases}
  \end{align}

  We now show that the signaling mechanism $\widehat{\sigma}$ is
  persuasive for all distributions in $\mathsf{B}_1(\mu, \epsilon)$, in the
  sense that for all signals $s \in A_{+}$ it is optimal for the
  receiver to play $s$, and for all signals
  $s = (\omega, a_\omega) \in S$, it is optimal for the receiver to
  play $a_\omega$.  To see this, for any
  $\gamma \in \mathsf{B}_1(\mu, \epsilon)$, let $\gamma(\cdot |s)$
  denote the receiver's posterior under signaling mechanism
  $\widehat{\sigma}$ upon receiving the signal $s \in A_{+} \cup
  S$. For $s = (\omega, a_\omega) \in S$, we have
  $\gamma( \cdot |s) = e_\omega$, where $e_\omega$ is the belief that
  puts all its weight on $\omega \in \Omega$. Thus, upon receiving the
  signal $s = (\omega, a_\omega)$ it is optimal for the receiver with the
  distribution $\gamma$ to take action $a_\omega$. Thus, it only remains to
  show that signals $s = a \in A_{+}$ are persuasive.

  For $a \in A_{+}$, we have for $\omega \in \Omega$,
  \begin{align*}
    \mu(\omega| a) &= \frac{\mu(\omega) \widehat{\sigma}(\omega, a)}{\sum_{\omega' \in \Omega }\mu(\omega')\widehat{\sigma}(\omega', a)} = \xi_a(\omega)\\
    \gamma(\omega| a) &= \frac{\gamma(\omega) \widehat{\sigma}(\omega, a)}{\sum_{\omega' \in \Omega }\gamma(\omega')\widehat{\sigma}(\omega', a)} =   \frac{\gamma(\omega) }{\mu(\omega)} \cdot \frac{\xi_a(\omega)}{ \sum_{\omega' \in \Omega} \frac{\gamma(\omega') \xi_a(\omega')}{\mu(\omega')}}.
  \end{align*}

  Then, using triangle inequality and some algebra, we obtain
  \begin{align*}
    \left\| \gamma( \cdot | a) - \mu( \cdot | a) \right\|_1
    &= \sum_{\omega \in \Omega} \left|  \gamma(\omega|a) - \xi_a(\omega)\right|\\
    &\leq \sum_{\omega \in \Omega} \left|   \gamma(\omega|a) -  \frac{\gamma(\omega) }{\mu(\omega)} \cdot  \xi_a(\omega)\right| + \sum_{\omega \in \Omega} \left| \frac{\gamma(\omega) }{\mu(\omega)} \cdot  \xi_a(\omega) - \xi_a(\omega)\right|\\
    &\leq  2 \cdot \sup_{\omega \in \Omega} \frac{\xi_a(\omega) }{\mu(\omega)} \cdot {\| \gamma - \mu\|}_1\\
    &\leq \frac{2\epsilon}{p_0},
  \end{align*}
  where in the final inequality, we have used
  $\min_\omega \mu(\omega) \geq p_0$ to get
  $\sup_{\omega \in \Omega} \frac{\xi_a(\omega) }{\mu(\omega)} \leq
  \frac{1}{p_0}$. Since $\mu(\cdot| a) = \xi_a$, this implies that
  $\gamma( \cdot |a) \in \mathsf{B}_1\left(\xi_a,
    \frac{2\epsilon}{p_0}\right) = \mathsf{B}_1(\xi_a, \delta D)
  \subseteq \mathcal{P}_a$. Thus, the signal $a \in A_{+}$ is
  persuasive for the distribution $\gamma \in \mathsf{B}_1(\mu,
  \epsilon)$. Taken together, we obtain that the signaling mechanism
  $\widehat{\sigma}$ is persuasive for all
  $\gamma \in \mathsf{B}_1(\mu, \epsilon)$.

  The persuasiveness of $\widehat{\sigma}$ for all
  $\gamma \in \mathsf{B}_1(\mu, \epsilon)$ implies that
  \begin{align*}
    \sup_{\sigma' \in \pers(\mathsf{B}_1(\mu, \epsilon))} V(\mu, \sigma')
    &\geq V(\mu, \widehat{\sigma})\\
    &=  \sum_{\omega \in \Omega} \sum_{a \in A_{+}} \mu(\omega) \widehat{\sigma}(\omega,a) v(\omega, a) + \sum_{\omega \in \Omega} \sum_{s \in S} \mu(\omega) \widehat{\sigma}(\omega,s) v(\omega, a_\omega) \\
    &\geq \sum_{\omega \in \Omega} \sum_{a \in A_{+}}  \bar{\rho} w_a \xi_a(\omega)  v(\omega, a)\\
    &=  \bar{\rho}  \sum_{\omega \in \Omega} \sum_{a \in A_{+}} w_a \left( (1-  \delta)\mu_a(\omega) + \delta \eta_a(\omega) \right)  v(\omega, a)\\
    &\geq \bar{\rho} (1-  \delta)  \sum_{\omega \in \Omega} \sum_{a \in A_{+}} w_a \mu_a(\omega) v(\omega, a) \\
    &= \bar{\rho} (1-  \delta) \opt(\mu).
  \end{align*}
  Thus, we obtain
  \begin{align*}
    \gap(\mu, \mathsf{B}_1(\mu, \epsilon))
    &= \opt(\mu)  - \sup_{\sigma' \in \pers(\mathsf{B}_1(\mu, \epsilon))} V(\mu, \sigma')\\
    &\leq \left(1 - \bar{\rho}(1 - \delta)\right) \opt(\mu)\\
    &\leq \left(\frac{4}{p_0^2 D}\right)\epsilon ,
  \end{align*}
  where the final inequality follows from
  $\bar{\rho} \geq \frac{p_0}{p_0 + \delta}$,
  $\delta = \frac{2\epsilon}{p_0D}$ and
  $\opt(\mu) \leq 1$.~\Halmos\endproof

\proof{Proof of Proposition~\ref{prop:lower-bound}.}
  It is straightforward to verify that the following signaling
  mechanism $\sigma^* \in \pers(\mu^*)$ optimizes the sender's
  expected utility among all mechanisms in $\pers(\mu^*)$:
  \begin{align*}
    \sigma^*(\omega_0, a_1) &= \sigma^*(\omega_0, a_2) = \frac{1}{2}, \\
    \sigma^*(\omega_1, a_1) &= \sigma^*(\omega_2, a_2) = \frac{1}{2}+\frac{D}{2(1-p_0)}, \\
    \sigma^*(\omega_1, a_2) &= \sigma^*(\omega_2, a_1) = \frac{1}{2}-\frac{D}{2(1-p_0)}, \\
    \sigma^*(\omega, a) &= 0, \quad \text{otherwise}.
  \end{align*}
  Since the action recommendations are always in $\{ a_1, a_2\}$, we
  obtain $\opt(\mu^*) =1$.

  Recall that
  $\bar{\mu}_1 = \mu^* + \frac{\epsilon}{2}(e_1-e_2), \bar{\mu}_2 =
  \mu^* + \frac{\epsilon}{2}(e_2-e_1)$. By the linearity of obedience
  constraints and $\mu^* = (\bar{\mu}_1 + \bar{\mu}_2)/2$, it follows
  that $\pers(\{ \mu^*, \bar{\mu}_1, \bar{\mu}_2 \})$ can be obtained
  by imposing the obedience constraints at distributions $\bar{\mu}_1$
  and $\bar{\mu}_2$. The optimization problem
  $\max_\sigma \{ V(\mu^*, \sigma) : \sigma \in \pers(\{\bar{\mu}_1,
  \bar{\mu}_2 \})\}$ can be solved to obtain the following optimal
  signaling mechanism:
  \begin{align*}
    \hat{\sigma}(\omega_0, a_1) &= \hat{\sigma}(\omega_0, a_2) = \frac{1}{2},\\
    \hat{\sigma}(\omega_1, a_1) &= \hat{\sigma}(\omega_2, a_2) = \frac{X}{Z},\\
    \hat{\sigma}(\omega_1, a_2) &= \hat{\sigma}(\omega_2, a_1) = \frac{Y}{Z},\\
    \hat{\sigma}(\omega_1, a_3) &= \hat{\sigma}(\omega_2, a_4) = 1-\hat{\sigma}(\omega_1, a_1)- \hat{\sigma}(\omega_1, a_2), \\
    \hat{\sigma}(\omega, a) &= 0, \quad \text{otherwise},
  \end{align*}
  where
  \begin{align*}
    X &= 2p_0(1-p_0-\epsilon)(1-p_0+D)D^2+p_0(1-p_0+\epsilon)(1-p_0-D-2D^2),\\
    Y &= p_0(1-p_0-\epsilon)(1-p_0-3D+2D^2)+2p_0(1-p_0+\epsilon)(1-p_0-D)(1-2D)D^2,\\
    Z &= (1-p_0+\epsilon)^2(1-p_0-D)(1-2D)(1-p_0-D-2D^2)\\
      &\quad -(1-p_0-\epsilon)^2(1-p_0+D)(1-p_0-3D+2D^2).
  \end{align*}
  The difference in the sender's expected utility between using the
  optimal persuasive signaling mechanism for the distribution
  $\mu^* \in \mathcal{B}$ and using the optimal signaling mechanism that
  is persuasive for all distributions in $\{\mu^*, \bar{\mu}_1, \bar{\mu}_2 \}$ is given by

  \begin{align*}
    \gap(\mu^*, \pers(\mu^*, \bar{\mu}_1, \bar{\mu}_2)) & = V(\mu^*, \sigma^*) - V(\mu^*, \hat{\sigma}) \\
                                        & \geq  \frac{\epsilon}{2}\frac{1/2+D p_0 (1+\epsilon/2-D p_0 -D)}{Dp_0+\epsilon} \\
                                        & \geq \frac{\epsilon}{8D p_0}.\Halmos
  \end{align*}
\endproof

\section{Proofs from Section~\ref{sec:regret-analysis}}
\subsection{Proof of Theorem~\ref{thm:vanishing-regret}}
\label{ap:vanishing-regret-proof}
In this section, we provide the proof of
Theorem~\ref{thm:vanishing-regret}. In the process, we also state and
prove several helper lemmas used in the proof.

\proof{Proof of Theorem~\ref{thm:vanishing-regret}.} 
In Lemma~\ref{lem:regret-expression}, we obtain the
following bound on the regret:
\begin{align*}
    \regret_\instance(\leap, \mu^* , T)
    &\leq  \sum_{t \in [T]} \gap( \mu^*, \mathsf{B}_1(\mu^*,
      {\| \mu^* - \gamma_t\|}_1))  + \sum_{t \in [T]} \gap(\gamma_{t}, \mathcal{B}_{t})\\
    &\quad + \sum_{t \in [T]} {\| \mu^* -
      \gamma_t\|}_1 + \sum_{t \in [T]} \left(\expec_{\mu^*}[v(\omega_t, a_t)|h_{t}] - v(\omega_t, a_t)\right).
\end{align*}
Now, from Proposition~\ref{prop:upper-bound}, we have
\begin{align*}
  \gap( \mu^*, \mathsf{B}_1(\mu^*, {\| \mu^* - \gamma_t\|}_1))
  &\leq   \left(\frac{4}{p_0^2D}\right) \cdot {\|\mu^* - \gamma_t\|}_1.
\end{align*}
Thus, we obtain
\begin{align*}
    \regret_\instance(\leap, \mu^* , T)
    &\leq  \sum_{t \in [T]} \gap(\gamma_{t}, \mathcal{B}_{t}) +  \left( \frac{4}{p_0^2D}  + 1\right) \sum_{t \in [T]} {\|\mu^* - \gamma_t\|}_1  \\
    &\quad   + \sum_{t \in [T]} \expec_{\mu^*}[v(\omega_t, a_t)|h_{t}] - v(\omega_t, a_t).
  \end{align*}
  Finally, in Lemma~\ref{lem:gap-gamma}, we show that on the event
  $\{\mu^* \in \mathcal{B}_t\}$, we have
  $\gap(\gamma_t, \mathcal{B}_t) \leq \left(\frac{16}{p_0^2
      D}\right) \epsilon_t$. Thus, on the event $\{\mu^* \in
  \cap_{t\in [T]}\mathcal{B}_t\}$, we obtain
  \begin{align*}
    \regret_\instance(\leap, \mu^* , T)
    &\leq    \left( \frac{20}{p_0^2D}  + 1\right) \sum_{t \in [T]} \epsilon_t  + \sum_{t \in [T]} \expec_{\mu^*}[v(\omega_t, a_t)|h_{t}] - v(\omega_t, a_t)\\
    &\leq    \left( \frac{20}{p_0^2D}  + 1\right) \left(2 + \sum_{t =1}^{T-1} \sqrt{\frac{|\Omega|}{t}}(1+ \sqrt{\Phi \log T})\right)\\
    &\quad + \sum_{t \in [T]} \expec_{\mu^*}[v(\omega_t, a_t)|h_{t}] - v(\omega_t, a_t)\\
    &\leq   2  \left( \frac{20}{p_0^2D}  + 1\right) \left(1 +  \sqrt{|\Omega|T}(1+ \sqrt{\Phi \log T})\right)\\
    &\quad + \sum_{t \in [T]} \expec_{\mu^*}[v(\omega_t, a_t)|h_{t}] - v(\omega_t, a_t),
  \end{align*}

  where in the final inequality, we have used the fact that
  $\sum_{t=1}^{T-1} 1/\sqrt{t} \leq 2 \sqrt{T}$.

  From Theorem~\ref{thm:persuasiveness}, we have
  $ \prob_{\mu}\left( \cap_{t \in [T]} \mathcal{B}_{t} \not\ni
    \mu\right) \leq T^{1 - \frac{3 \Phi \sqrt{\Omega}}{56}}$.
  For $t \in [T]$, let
  $X_t \defeq \expec_{\mu^*}[ v(\omega_t, a_t)|h_t] - v(\omega_t,
  a_t)$. Observe that $\expec_{\mu^*}[X_t|h_t] = 0$ and
  $|X_t| \leq 1$. Thus the sequence
  $\{ X_t : t \in [T]\}$ is a bounded martingale difference
  sequence. Hence, from Azuma-Hoeffding~\citep{boucheron2013concentration}, we obtain for
  $z \geq 0$,
  \begin{align*}
    \prob_{\mu^*}\left( \sum_{t \in [T]} \expec_{\mu^*}[ v(\omega_t, a_t)|h_t] - v(\omega_t, a_t) \geq z \right) < \exp\left(-\frac{2z^2}{T}\right).
  \end{align*}
  Choosing $z= \sqrt{ \alpha T\log T}$ with $\alpha > 0$, we have %
  \begin{align*}
      \prob_{\mu^*}\left( \sum_{t \in [T]} \expec_{\mu^*}[ v(\omega_t, a_t)|h_t] - v(\omega_t, a_t) \geq \sqrt{ \alpha T\log T} \right) < \frac{1}{T^{2\alpha}}.
    \end{align*}
    After choosing $\alpha = 4\Phi |\Omega|$ and taking the union
    bound, we obtain with probability at least $1 - T^{1 - \frac{3
        \Phi \sqrt{\Omega}}{56}} - T^{-8  \Phi |\Omega|}$, 
    \begin{align*}
      \regret_\instance(\leap, \mu^* , T) &\leq  2  \left( \frac{20}{p_0^2D}  + 1\right) \left(1 +  \sqrt{|\Omega|T}(1+ 2\sqrt{\Phi \log T})\right).\Halmos
    \end{align*}
\endproof

\begin{lemma}\label{lem:regret-expression} The \leap algorithm
  satisfies
  \begin{align*}
    \regret_\instance(\leap, \mu^* , T)
    &\leq  \sum_{t \in [T]} \gap( \mu^*, \mathsf{B}_1(\mu^*,
      {\| \mu^* - \gamma_t\|}_1))  + \sum_{t \in [T]} \gap(\gamma_{t}, \mathcal{B}_{t})\\
    &\quad + \sum_{t \in [T]} {\| \mu^* -
      \gamma_t\|}_1 + \sum_{t \in [T]} \left(\expec_{\mu^*}[v(\omega_t, a_t)|h_{t}] - v(\omega_t, a_t)\right).
  \end{align*}
\end{lemma}
\proof{Proof.}
  From the definition~\eqref{eq:regret} of regret, we have
  \begin{align}
    \regret_\instance(\leap, \mu^*, T)
    &= \opt(\mu^*)\cdot T - \sum_{t \in [T]} v(\omega_t, a_t) \notag\\
    &= \opt(\mu^*)\cdot T - \sum_{t \in [T]} \expec_{\mu^*}[v(\omega_t, a_t)|h_{t}] + \sum_{t \in [T]} \left(\expec_{\mu^*}[v(\omega_t, a_t)|h_{t}] - v(\omega_t, a_t)\right)\notag\\
    &=   \sum_{t \in [T]}\left( \opt(\mu^*) - V(\mu^*, \sigma[h_{t}])\right) + \sum_{t \in [T]} \left(\expec_{\mu^*}[v(\omega_t, a_t)|h_{t}] - v(\omega_t, a_t)\right),\label{eq:regret-expression}
  \end{align}
  where in the last equality, we have used the fact that
  $\expec_{\mu^*}[v(\omega_t, a_t)|h_{t}] = V(\mu^*, \sigma[h_t])$.
  Moreover, note that
  \begin{align*}
    \opt(\mu^*) - V(\mu^*, \sigma[h_t])
    &= \opt(\mu^*) - V(\gamma_t, \sigma[h_t]) + V(\gamma_t, \sigma[h_t])  - V(\mu^*, \sigma[h_t])\\
    &= \left(\opt(\mu^*) - \opt(\gamma_t)\right) +  \left( \opt(\gamma_t) - V(\gamma_t, \sigma[h_t])\right)\notag \\
    &\quad + \left(V(\gamma_t, \sigma[h_t])  - V(\mu^*, \sigma[h_t])\right)\\
    &= \left(\opt(\mu^*) - \opt(\gamma_t)\right) +  \gap(\gamma_t, \mathcal{B}_t)\\
    &\quad + \left(V(\gamma_t, \sigma[h_t])  - V(\mu^*, \sigma[h_t])\right),
  \end{align*}
  where in the final equality, we have used the fact that
  $\opt(\gamma_t) - V(\gamma_t, \sigma[h_t]) = \gap(\gamma_t,
  \mathcal{B}_t)$. Substituting the preceding expression into
  \eqref{eq:regret-expression} yields
  \begin{align*}
    \regret_\instance(\leap, \mu^* , T)
    &=   \sum_{t \in [T]}\left( \opt(\mu^*) - \opt(\gamma_{t})\right) + \sum_{t \in [T]} \gap(\gamma_{t}, \mathcal{B}_{t})\\
    &\quad + \sum_{t \in [T]} \left( V(\gamma_{t}, \sigma[h_{t}]) - V(\mu^*, \sigma[h_{t}])\right)\\
    &\quad + \sum_{t \in [T]} \left(\expec_{\mu^*}[v(\omega_t, a_t)|h_{t}] - v(\omega_t, a_t)\right).
  \end{align*}
  Now, in Lemma~\ref{lem:opt-lipschitz}, we prove
  $\opt(\mu^*) -\opt(\gamma_t) \leq \gap( \mu^*, \mathsf{B}_1(\mu^*,
  {\| \mu^* - \gamma_t\|}_1)) + \frac{1}{2} \cdot {\| \mu^* -
    \gamma_t\|}_1$. Furthermore, in Lemma~\ref{lem:val-unif-cont}, we
  show that
  $V(\gamma_{t}, \sigma[h_{t}]) - V(\mu^*, \sigma[h_{t}]) \leq
  \frac{1}{2} {\| \mu^* - \gamma_t \|}_1$. Putting it all together
  yields the lemma statement.~\Halmos\endproof

\begin{lemma}\label{lem:opt-lipschitz} For any $\mu_1,\mu_2 \in
  \Delta(\Omega)$, we have
  \begin{align*}
    \opt(\mu_1) - \opt(\mu_2) &\leq \gap(\mu_1, \mathsf{B}_1(\mu_1,{\|\mu_1 - \mu_2\|}_1))  + \frac{1}{2}\cdot{\left\|\mu_1 - \mu_2\right\|}_1.
  \end{align*}
\end{lemma}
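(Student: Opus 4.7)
The plan is to rewrite the claimed inequality in a more convenient form by unpacking the definition of $\gap$. Setting $\epsilon = \|\mu_1 - \mu_2\|_1$ and recalling that $\gap(\mu_1, \mathsf{B}_1(\mu_1, \epsilon)) = \opt(\mu_1) - \sup_{\sigma \in \pers(\mathsf{B}_1(\mu_1, \epsilon))} V(\mu_1, \sigma)$, the desired bound is algebraically equivalent to
\begin{align*}
  \sup_{\sigma \in \pers(\mathsf{B}_1(\mu_1, \epsilon))} V(\mu_1, \sigma) \;\leq\; \opt(\mu_2) + \tfrac{1}{2}\epsilon.
\end{align*}
So I would take an (approximate) maximizer $\tilde\sigma \in \pers(\mathsf{B}_1(\mu_1, \epsilon))$ and bound $V(\mu_1, \tilde\sigma)$ in terms of $\opt(\mu_2)$ plus a perturbation term.

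The first step is the key containment: since $\|\mu_1-\mu_2\|_1 = \epsilon$, we have $\mu_2 \in \mathsf{B}_1(\mu_1,\epsilon)$, and hence by definition of robust persuasion, $\pers(\mathsf{B}_1(\mu_1, \epsilon)) \subseteq \pers(\mu_2)$. Therefore $\tilde\sigma$ is itself persuasive under $\mu_2$, giving $V(\mu_2, \tilde\sigma) \leq \opt(\mu_2)$. The second step handles the switch of measure: write $V(\mu_1, \tilde\sigma) = V(\mu_2, \tilde\sigma) + \bigl(V(\mu_1, \tilde\sigma) - V(\mu_2, \tilde\sigma)\bigr)$ and invoke Lemma~\ref{lem:val-unif-cont} (whose statement is used in the companion Lemma~\ref{lem:regret-bound}) to bound $|V(\mu_1, \tilde\sigma) - V(\mu_2, \tilde\sigma)| \leq \tfrac{1}{2}\|\mu_1-\mu_2\|_1 = \tfrac{\epsilon}{2}$; this is immediate from $v(\cdot,\cdot) \in [0,1]$ and total-variation being half the $\ell_1$-distance.

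Combining these two bounds gives $V(\mu_1, \tilde\sigma) \leq \opt(\mu_2) + \tfrac{\epsilon}{2}$, and taking a supremum (or letting the near-maximizer tend to the supremum) yields the displayed inequality above. Rearranging recovers the lemma statement. There is no real obstacle here; the only subtlety is to recognize that the $\gap$ quantity is set up precisely so that it absorbs the persuasiveness-comparison between using $\pers(\mathsf{B}_1(\mu_1,\epsilon))$ vs.\ $\pers(\mu_1)$, which is exactly the reason it can be sharply matched with the Lipschitz perturbation $\tfrac{1}{2}\|\mu_1-\mu_2\|_1$ coming from changing the measure from $\mu_1$ to $\mu_2$. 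If Lemma~\ref{lem:val-unif-cont} is not available at this point in the appendix, I would prove it inline: $|V(\mu_1,\sigma) - V(\mu_2,\sigma)| = |\sum_{\omega,a}(\mu_1(\omega)-\mu_2(\omega))\sigma(\omega,a) v(\omega,a)| \leq \max_a \sum_\omega |\mu_1(\omega)-\mu_2(\omega)| \cdot \max_{\omega,a}|v(\omega,a)| \leq \tfrac12\|\mu_1-\mu_2\|_1$, using a standard rearrangement along with $v\in[0,1]$.
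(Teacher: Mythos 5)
Your argument is correct and matches the paper's proof essentially verbatim: both take the best robustly persuasive mechanism over $\pers(\mathsf{B}_1(\mu_1,\|\mu_1-\mu_2\|_1))$, use the containment $\mu_2 \in \mathsf{B}_1(\mu_1,\|\mu_1-\mu_2\|_1)$ (hence this mechanism lies in $\pers(\mu_2)$, so its $\mu_2$-value is at most $\opt(\mu_2)$), and invoke Lemma~\ref{lem:val-unif-cont} for the change of prior; you have simply rearranged the $\gap$ term to the other side rather than writing the telescoping chain. One small caution on your fallback inline derivation of Lemma~\ref{lem:val-unif-cont}: the displayed bound $\bigl|\sum_\omega(\mu_1(\omega)-\mu_2(\omega))\sum_a\sigma(\omega,a)v(\omega,a)\bigr| \leq \|\mu_1-\mu_2\|_1\cdot\max_{\omega,a}|v(\omega,a)|$ yields only $\|\mu_1-\mu_2\|_1$, not $\tfrac12\|\mu_1-\mu_2\|_1$; to recover the factor $\tfrac12$ you must first subtract a constant from $\sum_a\sigma(\omega,a)v(\omega,a)$ (legal since $\sum_\omega(\mu_1(\omega)-\mu_2(\omega))=0$), which is precisely the centering trick the paper uses in its proof of that lemma.
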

\proof{Proof.} Fix $\mu_1, \mu_2 \in \Delta(\Omega)$. For
  $i \in \{1, 2\}$, let
  $\sigma_i \in \arg\max_{\sigma' \in \pers(\mu_i)} V(\mu_i,
  \sigma')$. By definition, we have
  $\opt(\mu_i) = V(\mu_i, \sigma_i)$.

  Next, among all signaling mechanisms that are persuasive for all
  $\mu \in \mathsf{B}_1(\mu_1, {\|\mu_1 - \mu_2\|}_1)$, let $\sigma_3$
  maximize $V(\mu_1, \sigma)$. Since $\sigma_3$ is persuasive for
  $\mu_2$, we have
  $\opt(\mu_2) = V(\mu_2, \sigma_2) \geq V(\mu_2, \sigma_3)$. Thus, we
  have
  \begin{align*}
    \opt(\mu_1) - \opt(\mu_2)
    &= V(\mu_1, \sigma_1) - V(\mu_2, \sigma_2)\\
    &\leq V(\mu_1, \sigma_1) - V(\mu_2, \sigma_3)\\
    &= V(\mu_1, \sigma_1) - V(\mu_1, \sigma_3) + V(\mu_1, \sigma_3) - V(\mu_2, \sigma_3)\\
    &\leq \gap(\mu_1, \mathsf{B}_1(\mu_1,{\|\mu_1 - \mu_2\|}_1))  + \frac{1}{2}\cdot  {\left\|\mu_1 - \mu_2\right\|}_1.
  \end{align*}
  Here, the inequality follows from the definition of $\gap(\cdot)$,
  and from Lemma~\ref{lem:val-unif-cont}.~\Halmos\endproof

\begin{lemma}\label{lem:val-unif-cont}
  For any $\mu_1,\mu_2 \in \Delta(\Omega)$ and any signaling mechanism
  $\sigma$, we have
  \begin{align*}
    \left|V(\mu_1, \sigma) - V(\mu_2, \sigma)\right| &\leq \frac{1}{2} \cdot {\| \mu_1 - \mu_2\|}_1.
  \end{align*}
\end{lemma}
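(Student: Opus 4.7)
The plan is to expand $V(\mu_i,\sigma)$ via its definition, factor by $\omega$, and reduce the bound to a straightforward Hölder-type estimate. Specifically, writing
\begin{align*}
  V(\mu_1,\sigma) - V(\mu_2,\sigma)
  &= \sum_{\omega \in \Omega} \left( \mu_1(\omega) - \mu_2(\omega) \right) f(\omega),
\end{align*}
where $f(\omega) \defeq \sum_{a \in A} \sigma(\omega, a) v(\omega, a)$. Since $\sigma(\omega, \cdot) \in \Delta(A)$ and $v(\omega, a) \in [0,1]$ for all $(\omega, a)$, we have $f(\omega) \in [0, 1]$ for all $\omega \in \Omega$.

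The key observation is that $\mu_1 - \mu_2$ has mean zero, i.e., $\sum_\omega (\mu_1(\omega) - \mu_2(\omega)) = 0$. Therefore, I can subtract any constant $c$ from $f$ without changing the sum: taking $c = 1/2$, we have
\begin{align*}
  \left| V(\mu_1,\sigma) - V(\mu_2,\sigma) \right|
  &= \left| \sum_{\omega \in \Omega} (\mu_1(\omega) - \mu_2(\omega)) \left( f(\omega) - \tfrac{1}{2}\right)\right|\\
  &\leq {\| \mu_1 - \mu_2\|}_1 \cdot {\left\| f - \tfrac{1}{2}\right\|}_\infty,
\end{align*}
by Hölder's inequality. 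Finally, since $f(\omega) \in [0,1]$, we have ${\| f - 1/2\|}_\infty \leq 1/2$, giving the desired bound.

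There is no real obstacle here; the lemma is essentially the statement that a bounded linear functional of a probability vector is $\tfrac{1}{2}$-Lipschitz in the total variation (i.e., $\ell_1$) distance, and the only subtle point is the use of the zero-sum property to center $f$ before applying Hölder, which saves the factor of $2$ one would otherwise lose by bounding ${\| f\|}_\infty \leq 1$.
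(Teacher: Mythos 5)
Your proof is correct and follows essentially the same route as the paper: both center the sender's conditional value $f(\omega) = \sum_a \sigma(\omega,a)v(\omega,a)$ by a constant (you pick $1/2$ explicitly; the paper writes a free parameter $x$ and then optimizes, which yields the same choice) using the zero-sum property of $\mu_1 - \mu_2$, and then apply Hölder's inequality with the bound $\|f - 1/2\|_\infty \le 1/2$.
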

\proof{Proof.} Fix $\mu_1, \mu_2 \in \Delta(\Omega)$. For any signaling
  mechanism $\sigma$ that is persuasive under $\mu_1$, we have for any
  $x \in \reals$,
  \begin{align*}
    |V(\mu_1, \sigma) - V(\mu_2, \sigma)|
    &= \left|\sum_{\omega \in \Omega} \left(\mu_1(\omega) - \mu_2(\omega)\right) \left(\sum_{a \in A} \sigma(\omega, a) v(\omega, a) - x\right)\right|\\
    &\leq {\| \mu_1 - \mu_2\|}_1 \cdot \sup_{\omega \in \Omega} \left|\sum_{a \in A} \sigma(\omega, a) v(\omega, a) - x\right|,
  \end{align*}
  where we have used the H\"older's inequality in the last line.
  Optimizing over $x$, together with the fact that the sender's
  valuations lie in $[0,1]$, yields the result.~\Halmos\endproof

\begin{lemma}\label{lem:gap-gamma} For $t\in [T]$, on the event
  $\{\mu^* \in \mathcal{B}_t\}$, we have
  \begin{align*}
    \gap(\gamma_t, \mathcal{B}_t)
    &      \leq \left(\frac{16}{p_0^2 D}\right) \epsilon_t.
  \end{align*}
\end{lemma}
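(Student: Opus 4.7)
The plan is to reduce the claim to Proposition~\ref{prop:gap-bound} applied at the empirical prior $\gamma_t$ itself, rather than at a prior in $\mathcal{B}_0$. The subtlety is that Proposition~\ref{prop:gap-bound} is stated only for priors in $\mathcal{B}_0$, while $\gamma_t$ (being an empirical frequency estimate) is not guaranteed to satisfy $\min_\omega \gamma_t(\omega) \geq p_0$. The event $\{\mu^* \in \mathcal{B}_t\}$ is precisely what lets us recover a substitute coordinate-wise lower bound on $\gamma_t$.

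First I would observe that on this event $\|\mu^* - \gamma_t\|_1 \leq \epsilon_t$, which combined with $\min_\omega \mu^*(\omega) \geq p_0$ yields $\min_\omega \gamma_t(\omega) \geq p_0 - \epsilon_t$. I would then split into two regimes. In the trivial regime $\epsilon_t \geq p_0/2$, the target bound satisfies $\frac{16\epsilon_t}{p_0^2 D} \geq \frac{8}{p_0 D} \geq 1$, where the last inequality uses $p_0 \leq 1/|\Omega| \leq 1/2$ and the standard bound $D \leq 2$ on $\ell_1$-ball radii inside the simplex. Since $\gap(\gamma_t, \mathcal{B}_t) \leq \opt(\gamma_t) \leq 1$ holds trivially (the full-information mechanism always lies in $\pers(\mathcal{B}_t)$ and $v \in [0,1]$), the claim follows in this regime.

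In the nontrivial regime $\epsilon_t < p_0/2$, we have $\min_\omega \gamma_t(\omega) \geq p_0/2$. Here I would revisit the proof of Proposition~\ref{prop:gap-bound} and observe that the hypothesis $\mu \in \mathcal{B}_0$ is used only to extract the coordinate-wise lower bound $\min_\omega \mu(\omega) \geq p_0$, and only in two specific steps: the lower bound $\bar{\rho} \geq p_0/(p_0+\delta)$ arising from $\|\chi - \mu\|_1 \geq 2\min_\omega \mu(\omega)$, and the estimate $\sup_\omega \xi_a(\omega)/\mu(\omega) \leq 1/p_0$ used to control the posterior shift under perturbations of the prior. Rerunning the same construction of $\widehat{\sigma}$ with $\gamma_t$ in place of $\mu$ and $p_0/2$ in place of $p_0$ throughout then yields $\gap(\gamma_t, \mathsf{B}_1(\gamma_t, \epsilon_t)) \leq \frac{4\epsilon_t}{(p_0/2)^2 D} = \frac{16\epsilon_t}{p_0^2 D}$, which is exactly the claim since $\mathcal{B}_t = \mathsf{B}_1(\gamma_t, \epsilon_t)$.

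The main obstacle I anticipate is the conceptual point that Proposition~\ref{prop:gap-bound} is a statement about priors in $\mathcal{B}_0$, whereas $\gamma_t$ may lie just outside. The key observation that unlocks the proof is that on the persuasiveness event, the empirical prior inherits the weaker but still positive coordinate-wise lower bound $p_0/2$, and this is all that the construction in Proposition~\ref{prop:gap-bound} actually uses, at the cost of a factor of four in the final constant.
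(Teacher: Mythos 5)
Your proof is correct and matches the paper's own argument essentially step for step: both use the event $\{\mu^* \in \mathcal{B}_t\}$ to derive the coordinatewise lower bound $\gamma_t(\omega) \geq p_0 - \epsilon_t$, split into the regimes $\epsilon_t \lessgtr p_0/2$, dispatch the large-$\epsilon_t$ case trivially via $\gap \leq 1$, and in the small-$\epsilon_t$ case re-run the construction of Proposition~\ref{prop:gap-bound} with $p_0/2$ substituting for $p_0$. The only cosmetic difference is that you spell out a bit more explicitly which steps of that proposition's proof use the lower bound on $\min_\omega\mu(\omega)$, which is fine but not a different route.
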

\proof{Proof.}
On the event $\{\mu^* \in \mathcal{B}_t\}$, we have
  \begin{align*}
    \gamma_t(\omega)
    &\geq \mu(\omega) - {\| \gamma_t - \mu^*\|}_1\\
    &\geq p_0 - \epsilon_t.
  \end{align*}
  Thus, for $\epsilon_t < \frac{p_0}{2}$, we have
  $\min_\omega \gamma_t(\omega) \geq \frac{p_0}{2}$.  Using the same
  argument as in Proposition~\ref{prop:upper-bound}, we then obtain
  \begin{align*}
    \gap(\gamma_t, \mathcal{B}_t)
    &= \gap(\gamma_t, \mathsf{B}_1(\gamma_t, \epsilon_t))
      \leq \left(\frac{4}{D \min_\omega \gamma_t(\omega)^2}\right) \epsilon_t
      \leq \left(\frac{16}{p_0^2 D}\right) \epsilon_t.
  \end{align*}
  For $\epsilon_t > p_0/2$, the bound holds trivially since
  $16\epsilon_t/p_0^2D > 1$.~\Halmos\endproof

\subsection{Proof of Theorem~\ref{thm:regret-lower-bound}}
\label{ap:regret-lower-bound}
We conclude this section with the proof of the lower bound in
Theorem~\ref{thm:regret-lower-bound}.

\proof{Proof of Theorem~\ref{thm:regret-lower-bound}.} For a distribution
$\mu \in \mathcal{B}_0$, define the event $\event(\mu)$ as
  \begin{align*}
    \event(\mu) = \left\{ h_T : \sigma^\alg[h_t] \in \pers(\mu),
    ~\text{for each $t \in [T]$}\right\}.
  \end{align*}
  In words, under the event $\event(\mu)$, the signaling mechanism
  $\sigma^\alg[h_t]$ chosen by the algorithm $\alg$ after any history
  $h_t \in \event(\mu)$ is persuasive for the distribution $\mu$. Since the
  algorithm $\alg$ is $\beta_T$-robustly persuasive, we obtain
  \begin{align*}
    \prob_\mu\left( \event(\mu)\right) \geq 1 -\beta_T, \quad \text{for
    all $\mu \in \mathcal{B}_0$.}
  \end{align*}

  Fix an $\epsilon \in (0,\tfrac{1 - 3p_0}{2})$ to be chosen later,
  and consider the distributions
  $\mu^* = (p_0, \frac{1-p_0}{2},\frac{1-p_0}{2})$ and
  $\bar{\mu}_1 = \mu^* + \frac{\epsilon}{2} \left(e_1 - e_2\right)$
  and
  $\bar{\mu}_2 = \mu^* + \frac{\epsilon}{2} \left(e_2 - e_1\right)$,
  where $e_j$ is the belief that puts all its weight on state
  $\omega_j$ for $j \in \{1,2\}$. Observe that
  $\prob_{\mu^*}\left( \event(\mu^*)\right) \geq 1 -\beta_T$ since
  $\mu^* \in \mathcal{B}_0$. Note that for each $j \in \{1, 2\}$ and
  for all $\epsilon \in (0, \tfrac{1-3p_0}{2})$, we have
  $\bar{\mu}_j \in \mathcal{B}_0$ and hence
  $\prob_{\bar{\mu}_j}\left( \event(\bar{\mu}_j)\right) \geq 1
  -\beta_T$.

  Now, on the event
  $\event(\mu^*) \cap \event(\bar{\mu}_1) \cap
  \event(\bar{\mu}_2)$, the signaling mechanism $\sigma^\alg[h_t]$
  chosen by the algorithm after any history $h_t$ is persuasive for
  all the distributions $\mu^*, \bar{\mu}_1, \bar{\mu}_2$. Thus on the event
  $\event(\mu^*) \cap \event(\bar{\mu}_1) \cap
  \event(\bar{\mu}_2)$, we have
  \begin{align*}
    T \cdot \opt(\mu^*) - \sum_{t \in [T]} V(\mu^*, \sigma^\alg[h_t])
    &\geq  T \cdot  \gap(\mu^*, \{\mu^*, \bar{\mu}_1, \bar{\mu}_2\}) \geq \frac{\epsilon T}{8D p_0},
  \end{align*}
  where the first inequality follows from the definition of $\gap$
  in~\eqref{eq:gap}, and the second inequality follows from
  Proposition~\ref{prop:lower-bound}.

  Now, we have
  \begin{align*}
    2 \left| \prob_{\mu^*}\left( \event(\bar{\mu}_1)\right) - \prob_{\bar{\mu}_1}\left( \event(\bar{\mu}_1)\right)\right|^2
    &\leq \sum_{t \in [T]} \kl{\mu^*}{\bar{\mu}_1}\\
    &= \frac{1-p_0}{2} \log\left( \frac{(1-p_0)^2}{(1-p_0)^2 - \epsilon^2} \right)T\\
    &= \frac{1-p_0}{2} \log\left( 1 + \frac{\epsilon^2}{(1-p_0)^2 - \epsilon^2} \right)T\\
    &\leq \frac{1-p_0}{2} \left( \frac{\epsilon^2}{(1-p_0)^2 - \epsilon^2} \right)T,
  \end{align*}
  where the first inequality is the Pinsker's inequality, and the
  first equality is from the definition of the Kullback-Leibler
  divergence, and the final inequality follows from
  $\log( 1 + x) \leq x$ for $x \geq 0$. Thus, for
  $\epsilon < \tfrac{1-p_0}{2}$, we obtain
  \begin{align*}
    2 \left| \prob_{\mu^*}\left( \event(\bar{\mu}_1)\right) - \prob_{\bar{\mu}_1}\left( \event(\bar{\mu}_1)\right)\right|^2 \leq \frac{2\epsilon^2T}{3(1-p_0)} \leq \epsilon^2T,
  \end{align*}
  where we have used $p_0 \leq \frac{1}{|\Omega|} = \frac{1}{3}$ in
  the final inequality. Thus, we obtain that
  \begin{align*}
    \prob_{\mu^*}\left(\event(\bar{\mu}_1)\right)
    &\geq \prob_{\mu^*}\left(\event(\mu^*)\right)  - \left| \prob_{\mu^*}\left( \event(\bar{\mu}_1)\right) - \prob_{\bar{\mu}_1}\left( \event(\bar{\mu}_1)\right)\right|\\
    &\geq 1 - \beta_T -  \epsilon\sqrt{\frac{T}{2}}.
  \end{align*}
  By the same argument, we obtain
  $\prob_{\mu^*}\left(\event(\bar{\mu}_2)\right) \geq 1 - \beta_T -
  \epsilon\sqrt{\frac{T}{2}}$.

  By the linearity of the obedience constraints, we obtain that if
  $\sigma \in \pers(\bar{\mu}_1) \cap \pers(\bar{\mu}_2)$, then
  $\sigma \in \pers(\mu^*)$. Thus, we have
  $\event(\bar{\mu}_1) \cap \event(\bar{\mu}_2) \subseteq \event(\mu^*)$, and
  hence
  \begin{align*}
    \prob_{\mu^*}( \event(\mu^*) \cap \event(\bar{\mu}_1) \cap \event(\bar{\mu}_2))
    &= \prob_{\mu^*}(\event(\bar{\mu}_1) \cap \event(\bar{\mu}_2))\\
    &\geq  \prob_{\mu^*}(\event(\bar{\mu}_1)) + \prob_{\mu^*}(\event(\bar{\mu}_2)) -1\\
    &\geq 1 - 2 \beta_T - \epsilon\sqrt{2T}.
  \end{align*}
  Finally, by the Azuma-Hoeffding inequality, we obtain
  \begin{align*}
    \prob_{\mu^*} \left( \sum_{t \in [T]} V(\mu^*, \sigma^\alg[h_t]) - \sum_{t \in [T]} v(\omega_t, a_t) < - \sqrt{T} \right) < e^{-1/2}.
  \end{align*}

  Taken together, we obtain that with probability at least
  $1 - 2\beta_T - \epsilon\sqrt{2T} - e^{-1/2}$, we have
  \begin{align*}
    \regret_\instance(\alg, T, \mu^*) = T \cdot \opt(\mu^*) - \sum_{t \in [T]} v(\omega_t, a_t) \geq  \frac{\epsilon T}{8D p_0} - \sqrt{T}.
  \end{align*}
  For $T \geq T_0 = \frac{1}{(1-3p_0)^2}$, choosing
  $\epsilon = \frac{1}{32\sqrt{T}} \leq \frac{1-3p_0}{2}$, we obtain,
  with probability at least $\frac{1}{3} - 2\beta_T$,
\begin{align*}
  \regret_\instance(\alg, T, \mu^*) = T \cdot \opt(\mu^*) - \sum_{t \in [T]} v(\omega_t, a_t) \geq  \sqrt{T}\left(\frac{1}{16D p_0} - 1 \right) \geq \frac{\sqrt{T}}{32Dp_0},
\end{align*}
for $Dp_0 < 1/32$.~\Halmos\endproof
%
%

%
%
%
%
%
%
%
%
%
%
%
%
%
%
%
%
%
%
%
%
%
%
%
%

%
%
%
%
%
%

%
%
%
%
%
%
%
%
%
%
%
%
%
%
%
%
%

%
%
%
%
%
%
%
%
%
%
%
%
%
%
%
%
%
%
%
%
%
%
%
%
%
%

%
%
%
%
%
%
%
%
%
%
%
%
%
%
%
%
%
%

%
%
%
%
%
%
%
%
%
%
%
%
%

%
%
%
%
%
%
%

%
%
%
%
 % end input ./appendix.tex
 \end{APPENDICES}

  \ACKNOWLEDGMENT{The first and the second author gratefully
    acknowledge partial support from the National Science Foundation
    under grant CMMI-2002156. The third author is supported by an NSF
    award CCF-2303372, an ARO award W911NF-23-1-0030 and a Google
    Faculty Research Award. A preliminary version of this work
    appeared as an extended abstract at the 22nd ACM Conference on
    Economics and Computation (EC 2021).}

\bibliographystyle{informs2014}
\bibliography{learning-persuade}

\begin{thebibliography}{57}
\providecommand{\natexlab}[1]{#1}
\providecommand{\url}[1]{\texttt{#1}}
\providecommand{\urlprefix}{URL }

\bibitem[{Agrawal \protect\BIBand{} Devanur(2014)}]{agrawal2014bandits}
Agrawal S, Devanur NR (2014) Bandits with concave rewards and convex knapsacks.
  \emph{Proceeding of the fifteenth ACM conference on Economics and
  computation}, 989--1006.

\bibitem[{Amani et~al.(2019)Amani, Alizadeh, \protect\BIBand{}
  Thrampoulidis}]{amani2019linear}
Amani S, Alizadeh M, Thrampoulidis C (2019) Linear stochastic bandits under
  safety constraints. \emph{Advances in Neural Information Processing Systems}
  32.

\bibitem[{Antelmi et~al.(2019)Antelmi, Malandrino, \protect\BIBand{}
  Scarano}]{antelmi2019characterizing}
Antelmi A, Malandrino D, Scarano V (2019) Characterizing the behavioral
  evolution of twitter users and the truth behind the 90-9-1 rule.
  \emph{Companion Proceedings of The 2019 World Wide Web Conference},
  1035--1038.

\bibitem[{Anunrojwong et~al.(2020)Anunrojwong, Iyer, \protect\BIBand{}
  Lingenbrink}]{anunrojwong2020persuading}
Anunrojwong J, Iyer K, Lingenbrink D (2020) Persuading risk-conscious agents: A
  geometric approach. \emph{Available at SSRN 3386273} .

\bibitem[{Aumann et~al.(1995)Aumann, Maschler, \protect\BIBand{}
  Stearns}]{aumann1995repeated}
Aumann RJ, Maschler M, Stearns RE (1995) \emph{Repeated games with incomplete
  information} (MIT press).

\bibitem[{Balcan et~al.(2015)Balcan, Blum, Haghtalab, \protect\BIBand{}
  Procaccia}]{balcan2015commitment}
Balcan MF, Blum A, Haghtalab N, Procaccia AD (2015) Commitment without regrets:
  Online learning in stackelberg security games. \emph{Proceedings of the
  sixteenth ACM conference on economics and computation}, 61--78.

\bibitem[{Bergemann \protect\BIBand{} Morris(2016)}]{bergemann2016bayes}
Bergemann D, Morris S (2016) Bayes correlated equilibrium and the comparison of
  information structures in games. \emph{Theoretical Economics} 11(2):487--522.

\bibitem[{Bergemann \protect\BIBand{} Morris(2019)}]{bergemann2019information}
Bergemann D, Morris S (2019) Information design: A unified perspective.
  \emph{Journal of Economic Literature} 57(1):44--95.

\bibitem[{Besson \protect\BIBand{} Kaufmann(2018)}]{besson2018doubling}
Besson L, Kaufmann E (2018) What doubling tricks can and can't do for
  multi-armed bandits. \emph{arXiv preprint arXiv:1803.06971} .

\bibitem[{Boucheron et~al.(2013)Boucheron, Lugosi, \protect\BIBand{}
  Massart}]{boucheron2013concentration}
Boucheron S, Lugosi G, Massart P (2013) \emph{Concentration inequalities: A
  nonasymptotic theory of independence} (Oxford university press).

\bibitem[{Camara et~al.(2020)Camara, Hartline, \protect\BIBand{}
  Johnsen}]{camara2020mechanisms}
Camara MK, Hartline JD, Johnsen A (2020) Mechanisms for a no-regret agent:
  Beyond the common prior. \emph{2020 IEEE 61st Annual Symposium on Foundations
  of Computer Science (FOCS)}, volume~1, 259--270 (Los Alamitos, CA, USA: IEEE
  Computer Society),
  \urlprefix\url{http://dx.doi.org/10.1109/FOCS46700.2020.00033}.

\bibitem[{Candogan(2020)}]{candogan2020information}
Candogan O (2020) Information design in operations. \emph{Pushing the
  Boundaries: Frontiers in Impactful OR/OM Research}, 176--201 (INFORMS).

\bibitem[{Cao \protect\BIBand{} Liu(2018)}]{cao2018online}
Cao X, Liu KR (2018) Online convex optimization with time-varying constraints
  and bandit feedback. \emph{IEEE Transactions on automatic control}
  64(7):2665--2680.

\bibitem[{Cao et~al.(2019)Cao, Zhang, \protect\BIBand{} Poor}]{cao2019time}
Cao X, Zhang J, Poor HV (2019) On the time-varying distributions of online
  stochastic optimization. \emph{2019 American Control Conference (ACC)},
  1494--1500 (IEEE).

\bibitem[{Castiglioni et~al.(2020)Castiglioni, Celli, Marchesi,
  \protect\BIBand{} Gatti}]{castiglioni2020online}
Castiglioni M, Celli A, Marchesi A, Gatti N (2020) Online bayesian persuasion.
  \emph{Advances in Neural Information Processing Systems} 33.

\bibitem[{Chawla et~al.(2013)Chawla, Hartline, Malec, \protect\BIBand{}
  Sivan}]{chawla2013prior}
Chawla S, Hartline JD, Malec D, Sivan B (2013) Prior-independent mechanisms for
  scheduling. \emph{Proceedings of the forty-fifth annual ACM symposium on
  Theory of computing}, 51--60.

\bibitem[{Chen et~al.(2020)Chen, Liu, \protect\BIBand{}
  Podimata}]{chen2019learning}
Chen Y, Liu Y, Podimata C (2020) Learning strategy-aware linear classifiers.
  Larochelle H, Ranzato M, Hadsell R, Balcan MF, Lin H, eds., \emph{Advances in
  Neural Information Processing Systems}, volume~33, 15265--15276 (Curran
  Associates, Inc.),
  \urlprefix\url{https://proceedings.neurips.cc/paper/2020/file/ae87a54e183c075c494c4d397d126a66-Paper.pdf}.

\bibitem[{Dhangwatnotai et~al.(2015)Dhangwatnotai, Roughgarden,
  \protect\BIBand{} Yan}]{dhangwatnotai2015revenue}
Dhangwatnotai P, Roughgarden T, Yan Q (2015) Revenue maximization with a single
  sample. \emph{Games and Economic Behavior} 91:318--333.

\bibitem[{Dong et~al.(2018)Dong, Roth, Schutzman, Waggoner, \protect\BIBand{}
  Wu}]{dong2018strategic}
Dong J, Roth A, Schutzman Z, Waggoner B, Wu ZS (2018) Strategic classification
  from revealed preferences. \emph{Proceedings of the 2018 ACM Conference on
  Economics and Computation}, 55--70.

\bibitem[{Dughmi(2017)}]{dughmi2017algorithmicsurvey}
Dughmi S (2017) Algorithmic information structure design: a survey. \emph{ACM
  SIGecom Exchanges} 15(2):2--24.

\bibitem[{Dughmi \protect\BIBand{} Xu(2017)}]{dughmi2017algorithmic}
Dughmi S, Xu H (2017) Algorithmic persuasion with no externalities.
  \emph{Proceedings of the 2017 ACM Conference on Economics and Computation},
  351--368.

\bibitem[{Dughmi \protect\BIBand{} Xu(2021)}]{dughmi2019algorithmic}
Dughmi S, Xu H (2021) Algorithmic bayesian persuasion. \emph{SIAM Journal on
  Computing} 50(3):STOC16--68--STOC16--97.

\bibitem[{Dworczak \protect\BIBand{} Pavan(2020)}]{dworczak2020preparing}
Dworczak P, Pavan A (2020) Preparing for the worst but hoping for the best:
  Robust ({B}ayesian) persuasion. \emph{CEPR Discussion Paper No. DP15017} .

\bibitem[{Elliott et~al.(2022)Elliott, Galeotti, Koh, \protect\BIBand{}
  Li}]{elliott2022matching}
Elliott M, Galeotti A, Koh A, Li W (2022) Matching and information design in
  marketplaces. \emph{Available at SSRN} .

\bibitem[{Foucart \protect\BIBand{} Rauhut(2013)}]{foucartH13}
Foucart S, Rauhut H (2013) \emph{A Mathematical Introduction to Compressive
  Sensing} (Springer),
  \urlprefix\url{http://dx.doi.org/10.1007/978-0-8176-4948-7}.

\bibitem[{Guo et~al.(2022)Guo, Liu, Wei, \protect\BIBand{}
  Ying}]{guo2022online}
Guo H, Liu X, Wei H, Ying L (2022) Online convex optimization with hard
  constraints: Towards the best of two worlds and beyond. \emph{Advances in
  Neural Information Processing Systems}.

\bibitem[{Gur et~al.(2023)Gur, Macnamara, Morgenstern, \protect\BIBand{}
  Saban}]{gur2022information}
Gur Y, Macnamara G, Morgenstern I, Saban D (2023) Information disclosure and
  promotion policy design for platforms. \emph{Management Science} .

\bibitem[{Hahn et~al.(2019)Hahn, Hoefer, \protect\BIBand{}
  Smorodinsky}]{hahn2019secretary}
Hahn N, Hoefer M, Smorodinsky R (2019) The secretary recommendation problem.
  \emph{arXiv preprint arXiv:1907.04252} .

\bibitem[{Hahn et~al.(2020)Hahn, Hoefer, \protect\BIBand{}
  Smorodinsky}]{hahn2020prophet}
Hahn N, Hoefer M, Smorodinsky R (2020) Prophet inequalities for bayesian
  persuasion. \emph{Proc. 29th Int. Joint Conf. Artif. Intell.(IJCAI)},
  175--181.

\bibitem[{Haldane(1948)}]{haldane}
Haldane JBS (1948) The precision of observed values of small frequencies.
  \emph{Biometrika} 35(3/4):297--300, ISSN 00063444,
  \urlprefix\url{http://www.jstor.org/stable/2332350}.

\bibitem[{Hu \protect\BIBand{} Weng(2020)}]{hu2020robust}
Hu J, Weng X (2020) Robust persuasion of a privately informed receiver.
  \emph{Economic Theory} 1--45.

\bibitem[{Jaynes(2003)}]{jaynes}
Jaynes ET (2003) \emph{Probability theory: The logic of science} (Cambridge
  university press).

\bibitem[{Kamenica \protect\BIBand{} Gentzkow(2011)}]{kamenica2011bayesian}
Kamenica E, Gentzkow M (2011) Bayesian persuasion. \emph{American Economic
  Review} 101(6):2590--2615.

\bibitem[{Khezeli \protect\BIBand{} Bitar(2020)}]{Khezeli2020SafeLS}
Khezeli K, Bitar E (2020) Safe linear stochastic bandits. \emph{ArXiv}
  abs/1911.09501.

\bibitem[{Kim \protect\BIBand{} Lee(2023)}]{kim2023online}
Kim Y, Lee D (2023) Online convex optimization with stochastic constraints:
  Zero constraint violation and bandit feedback. \emph{arXiv preprint
  arXiv:2301.11267} .

\bibitem[{Kleinberg \protect\BIBand{} Leighton(2003)}]{kleinbergL03}
Kleinberg R, Leighton T (2003) The value of knowing a demand curve: Bounds on
  regret for online posted-price auctions. \emph{Proceedings of the 44th Annual
  IEEE Symposium on Foundations of Computer Science}, 594, FOCS '03 (USA: IEEE
  Computer Society), ISBN 0769520405.

\bibitem[{Kosterina(2018)}]{kosterina2018persuasion}
Kosterina S (2018) Persuasion with unknown beliefs. \emph{Work. Pap., Princeton
  Univ., Princeton, NJ} .

\bibitem[{Kremer et~al.(2014)Kremer, Mansour, \protect\BIBand{}
  Perry}]{kremer2014implementing}
Kremer I, Mansour Y, Perry M (2014) Implementing the “wisdom of the crowd”.
  \emph{Journal of Political Economy} 122(5):988--1012.

\bibitem[{Lattimore \protect\BIBand{} Szepesv{\'a}ri(2020)}]{lattimoreS2020}
Lattimore T, Szepesv{\'a}ri C (2020) \emph{Bandit Algorithms} (Cambridge
  University Press), ISBN 9781108486828,
  \urlprefix\url{https://books.google.com/books?id=bbjpDwAAQBAJ}.

\bibitem[{Liakopoulos et~al.(2019)Liakopoulos, Destounis, Paschos, Spyropoulos,
  \protect\BIBand{} Mertikopoulos}]{liakopoulos2019cautious}
Liakopoulos N, Destounis A, Paschos G, Spyropoulos T, Mertikopoulos P (2019)
  Cautious regret minimization: Online optimization with long-term budget
  constraints. \emph{International Conference on Machine Learning}, 3944--3952
  (PMLR).

\bibitem[{Mahdavi et~al.(2011)Mahdavi, Jin, \protect\BIBand{} Yang}]{mahdavi11}
Mahdavi M, Jin R, Yang T (2011) Trading regret for efficiency: Online convex
  optimization with long term constraints. \emph{CoRR} abs/1111.6082,
  \urlprefix\url{http://arxiv.org/abs/1111.6082}.

\bibitem[{Mahdavi et~al.(2013)Mahdavi, Yang, \protect\BIBand{}
  Jin}]{mahdavi2013stochastic}
Mahdavi M, Yang T, Jin R (2013) Stochastic convex optimization with multiple
  objectives. \emph{Advances in neural information processing systems} 26.

\bibitem[{Mansour et~al.(2015)Mansour, Slivkins, \protect\BIBand{}
  Syrgkanis}]{mansour2015bayesian}
Mansour Y, Slivkins A, Syrgkanis V (2015) Bayesian incentive-compatible bandit
  exploration. \emph{Proceedings of the Sixteenth ACM Conference on Economics
  and Computation}, 565--582.

\bibitem[{Mansour et~al.(2016)Mansour, Slivkins, Syrgkanis, \protect\BIBand{}
  Wu}]{mansour2016bayesian}
Mansour Y, Slivkins A, Syrgkanis V, Wu ZS (2016) Bayesian exploration:
  Incentivizing exploration in bayesian games. \emph{Proceedings of the 2016
  ACM Conference on Economics and Computation}, 661--661.

\bibitem[{Moradipari et~al.(2021)Moradipari, Amani, Alizadeh, \protect\BIBand{}
  Thrampoulidis}]{moradipari2021safe}
Moradipari A, Amani S, Alizadeh M, Thrampoulidis C (2021) Safe linear thompson
  sampling with side information. \emph{IEEE Transactions on Signal Processing}
  69:3755--3767.

\bibitem[{Moradipari et~al.(2020)Moradipari, Thrampoulidis, \protect\BIBand{}
  Alizadeh}]{moradipari2020stage}
Moradipari A, Thrampoulidis C, Alizadeh M (2020) Stage-wise conservative linear
  bandits. \emph{Advances in neural information processing systems}
  33:11191--11201.

\bibitem[{Neely \protect\BIBand{} Yu(2017)}]{neely2017online}
Neely MJ, Yu H (2017) Online convex optimization with time-varying constraints.
  \emph{arXiv preprint arXiv:1702.04783} .

\bibitem[{Pacchiano et~al.(2021)Pacchiano, Ghavamzadeh, Bartlett,
  \protect\BIBand{} Jiang}]{pacchiano2021stochastic}
Pacchiano A, Ghavamzadeh M, Bartlett P, Jiang H (2021) Stochastic bandits with
  linear constraints. \emph{International Conference on Artificial Intelligence
  and Statistics}, 2827--2835 (PMLR).

\bibitem[{Romanyuk \protect\BIBand{} Smolin(2019)}]{romanyuk2019cream}
Romanyuk G, Smolin A (2019) Cream skimming and information design in matching
  markets. \emph{American Economic Journal: Microeconomics} 11(2):250--76.

\bibitem[{Usmanova et~al.(2019)Usmanova, Krause, \protect\BIBand{}
  Kamgarpour}]{usmanova2019safe}
Usmanova I, Krause A, Kamgarpour M (2019) Safe convex learning under uncertain
  constraints. \emph{The 22nd International Conference on Artificial
  Intelligence and Statistics}, 2106--2114 (PMLR).

\bibitem[{Van~Mierlo et~al.(2014)}]{van20141}
Van~Mierlo T, et~al. (2014) The 1\% rule in four digital health social
  networks: an observational study. \emph{Journal of medical Internet research}
  16(2):e2966.

\bibitem[{Villegas(1977)}]{villegas}
Villegas C (1977) On the representation of ignorance. \emph{Journal of the
  American Statistical Association} 72(359):651--654, ISSN 01621459,
  \urlprefix\url{http://www.jstor.org/stable/2286233}.

\bibitem[{Yang et~al.(2019)Yang, Iyer, \protect\BIBand{}
  Frazier}]{yang2019information}
Yang P, Iyer K, Frazier P (2019) Information design in spatial resource
  competition. \emph{arXiv preprint arXiv:1909.12723} .

\bibitem[{Yi et~al.(2021)Yi, Li, Yang, Xie, Chai, \protect\BIBand{}
  Johansson}]{yi2021regret}
Yi X, Li X, Yang T, Xie L, Chai T, Johansson K (2021) Regret and cumulative
  constraint violation analysis for online convex optimization with long term
  constraints. \emph{International Conference on Machine Learning},
  11998--12008 (PMLR).

\bibitem[{Yi et~al.(2022)Yi, Li, Yang, Xie, Chai, \protect\BIBand{}
  Karl}]{yi2022regret}
Yi X, Li X, Yang T, Xie L, Chai T, Karl H (2022) Regret and cumulative
  constraint violation analysis for distributed online constrained convex
  optimization. \emph{IEEE Transactions on Automatic Control} .

\bibitem[{Yu et~al.(2017)Yu, Neely, \protect\BIBand{} Wei}]{yu2017online}
Yu H, Neely MJ, Wei X (2017) Online convex optimization with stochastic
  constraints.

\bibitem[{Yuan \protect\BIBand{} Lamperski(2018)}]{yuan18}
Yuan J, Lamperski AG (2018) Online convex optimization for cumulative
  constraints. \emph{CoRR} abs/1802.06472,
  \urlprefix\url{http://arxiv.org/abs/1802.06472}.

\end{thebibliography}

\end{document}